\let\csname equation*\endcsname\relax
\let\csname endequation*\endcsname\relax
\newtheorem{theorem}{Theorem}
\newtheorem{lemma}{Lemma}
\newcommand\footnoteref[1]{\protected@xdef\@thefnmark{\ref{#1}}\@footnotemark}
\newcommand{\bk}[2]     {\langle #1 | #2 \rangle}
\newcommand{\cS}        {{\mathcal S}}
\newcommand{\cE}        {{\mathcal E}}
\newcommand\cF{{\mathcal F}}
\newcommand\hocom[1]{}
\newcommand{\ba}{\begin{eqnarray}}
	\newcommand{\ea}{\end{eqnarray}}
\newcommand{\bmath}{\begin{mathletters}}
	\newcommand{\emath}{\end{mathletters}}
\newcommand{\ban}{\begin{eqnarray*}}
	\newcommand{\ean}{\end{eqnarray*}}
\DeclareRobustCommand{\rchi}{{\mathpalette\irchi\relax}}
\newcommand{\irchi}[2]{\raisebox{\depth}{$#1\chi$}}
\newcommand{\tr}[1]{\mathrm{tr}\left\{#1\right\}}
\newcommand{\bla}{bla\\bla\\bla\\bla\\bla}
\newcommand{\mc}[1]{\mathcal{#1}}
\newcommand{\draftmode}{1}    
\newcommand{\notetoself}[1]{\ifnum \draftmode=1 {\color[rgb]{0,0,0.8} [#1]} \fi}  
\newcommand{\cuttext}[1]{\ifnum \draftmode=1 {\color[rgb]{0,0.5,0} [#1]} \fi}  
\newcommand{\warntext}[1]{\ifnum \draftmode=1 {\color[rgb]{0.9,0.6,0} #1} \else {#1} \color{black} \fi}
\newcommand{\aref}[1]{{Appendix~\hyperref[#1]{A}}}
\newcommand{\bref}[1]{{Appendix~\hyperref[#1]{B}}}
\newcommand{\dref}[1]{{Appendix~\hyperref[#1]{C}}}
\begin{document}

\title{Consensus About Classical Reality in a Quantum Universe}
\author{Akram Touil} 
\affiliation{$\hbox{Theoretical Division, Los Alamos National Laboratory, Los Alamos, New Mexico 87545}$
 }
 
 \author{Bin Yan}
\affiliation{$\hbox{Theoretical Division, Los Alamos National Laboratory, Los Alamos, New Mexico 87545}$
 }

 \author{Wojciech H. Zurek}
\affiliation{$\hbox{Theoretical Division, Los Alamos National Laboratory, Los Alamos, New Mexico 87545}$
 }

\date{\today}

\begin{abstract}
Quantum Darwinism recognizes that decoherence imprints redundant records of preferred quasi-classical pointer states on the environment. These redundant records are then accessed by observers. We show how redundancy enables and even implies consensus between observers who use fragments of that decohering environment to acquire information about systems of interest. We quantify consensus using information-theoretic measures that employ mutual information to assess the correlation between the records available to observers from distinct -- hence, independently accessible -- fragments of the environment. We prove that when these fragments have enough information about a system, observers that access them will attribute the same pointer state to that system. Thus, those who know enough about the system agree about what they know. We then test proposed measures of consensus in a solvable model of decoherence as well as in numerical simulations of a many-body system. These results provide detailed understanding of how our classical everyday world arises from within the fundamentally quantum Universe we inhabit.
\end{abstract}
\maketitle
\section{Introduction} 
Quantum Darwinism~\cite{Zurek2000AP,Ollivier2004PRL,Ollivier2005PRA,Zurek2009NP,zurek2022entropy,Korbicz2021roadstoobjectivity,Zurek_2025, davide2022} explains the emergence of objective classical reality from within our quantum universe. It builds on decoherence theory~\cite{Zurek2003RMP,schlosshauer2007,SCHLOSSHAUER2019,joos2013decoherence} which focuses on the suppression of the quantumness and on the environment-induced superselection (einselection) of the preferred pointer states~\cite{basis1,basis2}. Quantum Darwinism goes beyond the decoherence paradigm by recognizing that the decohering environment $\mc{E}$ also serves as a communication channel, carrying information about the state of the system of interest $\mc{S}$. 

For instance, light delivers most of our information. We intercept only a tiny fraction of the photons scattered or emitted by the objects of interest with our eyes, yet we obtain sufficient information about their states. 
This ability to recover information about the system from a small fraction -- fragment $\cF$ -- of its environment means that there are multiple, redundant imprints of the state of $\cS$ on that environment.
Hence, only the states that can survive repeated copying can be perceived in this indirect manner~\cite{Z07a,Z13}: When a quantum system $\mc{S}$ interacts with its environment $\mc{E}$, only the einselected pointer states can survive and produce multiple records of their persistent presence~\cite{basis1,basis2,Zurek_2025}. 

However, even such decoherence-resistant pointer states are still vulnerable: 
Direct measurements of non-commuting observables would re-prepare them and invalidate past records, precluding consensus between observers. Despite this fragility of quantum states, we inhabit a predictable classical world. This is possible because we do not rely on direct measurements. Rather, we monitor systems of interest indirectly, inferring their states from the small fragments of their environment. This avoids the risk of re-preparation of the state of the system, enabling consensus responsible for our perception of objective classical reality. This is also how perception of a unique result (``collapse'') can arise. 

Redundancy allows many observers to reach compatible conclusions about a system. It also implies that consecutive records extracted from the same environment will point to persistent presence of the same einselected states of $\cS$.

Studies on the emergence of classicality have focused on 
models of varying sophistication~\cite{Giorgi2015PRA,Balaneskovic2015EPJD,Balaneskovic2016EPJD,Knott2018PRL,Milazzo2019PRA,Campbell2019PRA,Ryan2020,Garcia2020PRR,Lorenzo2020PRR,Qdc1,Qdc2,Qdc3,Qdc4,Qdc5,Qdc6,Qdc7,Qdc8,Qdc9,Qdc10,Qdc11,Ciampini2018PRA,Chen2019SB,Unden2019PRL,Garcia2020NPJQI,touil2022branching,ftQD}. Consensus was explored only occasionally and qualitatively, i.e., somewhat indirectly. There are now also several experiments that confirm the basic tenets of Quantum Darwinism~\cite{Unden2019PRL,Ciampini2018PRA,Chen2019SB,sq2025}. 

To quantify consensus between observers, we propose information-theoretic measures and test them on a solvable model, showing that preferred pointer states can lead to consensus, unlike their superpositions.
We begin with a concise overview of information theory's role in Quantum Darwinism, laying the groundwork to address the central question of defining and quantifying consensus. We then present our main findings as theorems, offering rigorous evidence for the inevitable emergence of consensus in a quantum universe. Finally, we illustrate consensus through an analytically solvable model as well as numerical simulations of a many-body system.

\section{Branching states and mutual Information}

Quantum Darwinism recognizes that the information about pointer states is available from the fragments of the same environment (e.g., photons) that contributed to decoherence and their einselection. This information can be quantified using 
mutual information between the system $\cS$ and a fragment $\cF$~\cite{Zurek_2025,Ollivier2004PRL,Zurek2009NP,touil2022,touil2023black},
\begin{equation}
I(\cS:\cF)= H_{\cS}+H_{\cF}-H_{\mc{SF}},
\end{equation}
where $H_{X}=-\tr{\rho_{X} \log_2(\rho_{X})}$ is the von Neumann entropy of $\rho_{X}$. When the joint state of $\cS\cE$ has the branching form:
\begin{equation}
|\Psi_{\mc{SE}}\rangle = \sum^{D_{\mc{S}}-1}_{n=0} \sqrt{q_n} |s_n\rangle |\mc{F}_n\rangle |\mc{F}'_n\rangle|{\cE_{/\cF \cF'_n}}\rangle,
\label{branches}
\end{equation}
the mutual information $I(\cS:\cF)$ approaches the entropy of the system for sufficiently large fragments, $I(\cS:\cF) \approx H_\cS$~\cite{Zurek2003RMP,Ollivier2004PRL,blume2005simple,blume2006quantum}. Above, $D_{\mc{S}}$ is the number of pointer states of $\mc{S}$~\cite{basis1,basis2}, while $\mc{F}$ and $\mc{F}'$ represent different fragments and $\cE_{/\cF \cF'}$ is the rest of the environment. The above branching structure of the state may appear to be a strong assumption, but it is justified by appealing to decoherence of $\cS$ by an environment that consists of multiple subsystems (e.g., photons)~\cite{Riedel2010PRL,Riedel2011NJP}.

The mutual information $ I(\cS:\cF)$ is a measure of how much $\cF$ knows about $\cS$. Therefore, one might expect that, when there are many non-overlapping fragments $\cF$ that satisfy $I(\cS:\cF) \approx H_\cS$, observers who gain information about the system from these fragments should agree about the state of $\cS$. This is a suggestive argument, but a more direct assessment of consensus involves mutual information $I(\cF:\cF')$;
\begin{equation}
I(\cF:\cF')= H_{\cF}+H_{\cF'}-H_{\mc{FF'}},
\label{mutFF}
\end{equation}
which quantifies how much two fragments of $\cE$ know about one another. Nevertheless, 
one can imagine that $\cF$ and $\cF'$ share information that has nothing to do with the system of interest. Furthermore, while both $I(\cS:\cF)$ and $I(\cF:\cF')$ quantify shared information, they answer the question ``how much'', but---in absence of additional assumptions---they bypass the question ``what observable of $\cS$ is this information about?''. This concern---obvious for $I(\cF:\cF')$---can be also raised about $I(\cS:\cF)$: Granted, information is then about $\mc{S}$, but what states of the system are imprinted on, and presumably accessible via indirect measurements on the fragments of the environment? Last but not least, mutual informations provide answers in bits, about the shared information. However, consensus should be quantified by the extent of the agreement between the data available to different observers, e.g., by a number in the interval $[0,1]$, where $0$ and $1$ would correspond to its absence and to perfect consensus, respectively.

{\it Quantifying consensus via mutual information.}
To quantify the quality of the record of $\cS$ in $\cF$ by a number ${\mathfrak c} \in [0,1]$, for the states in~Eq.~\eqref{branches}, we normalize $I(\cS:\cF)$:
\begin{equation}
{\mathfrak c}(\cF : \cS) =  I(\cS:\cF) / H_\cS.
\label{consen1}
\end{equation}
This is a measure of the consensus between the state of the system $\cS$ and its record in the fragment $\cF$. For branching states, the plot of $I(\cS:\cF)$ asymptotes to the classical plateau, $ I(\cS:\cF) \approx H_\cS $, when the fragment size is large. It follows that ${\mathfrak c}(\cF : \cS)$  will asymptote to $1$, indicating the presence of a complete record in $\cF$ of the classical state of the system \footnote{We note that when $\cS\cE$ is pure, and $\cF $ approaches $\cE$, ${\mathfrak c}(\cF : \cS)_\cS $ can approach 2 as a consequence of entanglement that is responsible for the rise of $ I(\cS:\cF)$ to $2 H_\cS$ as $\cF \rightarrow \cE$.}.  

A measure that addresses a related question employs a hybrid (Shannon - von Neumann) entropy:
\begin{equation} 
{\mathfrak c}( \cS_\Lambda : \cF) = \rchi(\mc{F}:{\mc{S}}_\Lambda) / H_{ \cS} .
\end{equation}
It quantifies correlation between $\cF$ and an observable ${\Lambda}$ of the system as indicated by “$ \cS_\Lambda $”. Thus, ${\mathfrak c}(\cS_\Lambda : \cF)$ involves an observable-dependent conditional entropy: The Holevo quantity $\rchi(\mc{F}:{\mc{S}}_\Lambda)$ is defined by;
\begin{equation}
\rchi(\mc{F}:{\mc{S}}_\Lambda)=H_{\cF}-H_{\cF|{{\cS}_\Lambda}}.
\end{equation}
Above, $H_{\cF|{ \cS}_\Lambda}=-\sum_{i} p_{i} \tr{\rho^{(i)}_{\cF} \log_2(\rho^{(i)}_{\cF})}$ is the average entropy of the fragment $\cF$ after measurements of $\Lambda$ on $\cS$, where each measurement outcome leads to the conditional state $\rho^{(i)}_{\cF}$ with probability $p_i$.


The consensus between two fragments is defined by;
\begin{equation}
{\mathfrak c}(\cF :\cF') =  I(\cF:\cF') / H_\cS.
\label{consen2}
\end{equation}
This is the fraction of information about which $\cF$ and $\cF'$ agree: The correlation in the information content of $\cF$ and $\cF'$ is quantified by the mutual information $I(\cF:\cF')$, the key ingredient of ${\mathfrak c}(\cF:\cF')$. 

\section{sufficient information imposes consensus}

The mutual information $I(\cF:\cS)$ quantifies how much the fragment $\cF$ knows about the system. It increases with the size of $\cF$, approaching an approximately constant value $H_\cS$ at the {\it classical plateau} (see, e.g., Fig.~\ref{fig01}). Exact equality would indicate that $\cF$ has all the classically accessible information about the system. This can be attained only for fragments as large as half of the environment, $|\cF|=|\cE|/2$. We are interested in consensus between many observers who access independent fragments, so $|\cF|\ll|\cE|$. Mutual information $I(\cF:\cS)$ in such fragments can approach the classical plateau. The information deficit $\delta$~\cite{Ollivier2004PRL} quantifies how close for a fragment $\cF$, $I(\cF:\cS)$ is to the plateau:
$$I(\cF:\cS) = (1-\delta) H_\cS. $$
Small $\delta$ means $\cF$ is enough to extract most (all except for the {\it information deficit} $\delta$) of the classical information about $\cS$. 

We shall now demonstrate that small information deficit implies consensus about the state of $\cS$.
Consider a completely decohered branching state: When the reminder of the environment eliminates superposition between branches (e.g., when in Eq.~\eqref{branches}
$\bk { \cE_{/\cF \cF'_m}} { \cE_{/\cF \cF'_n}} = \delta_{mn}$),
several entropies that define $I(\cF:\cS)$ and $I(\cF:\cF')$ are equal; 
$H_{\mathcal{S F}}=H_{\mathcal{S F'}}=H_{\mathcal{S F F'}}=H_{\mathcal{S}}$. 
The proof below employs this equality to obtain a bound on {\it consensus deficit} $\hat \delta$ in $I(\cF:\cF') = (1- \hat \delta)H_\cS$ using information deficits of $\cF$ and $\cF'$. 

\begin{theorem}

    
When branches are completely decohered, and $I(\mathcal{F}: \mathcal{S})=(1-\delta) H_{\mathcal{S}}, I(\mathcal{F}^{\prime}:\mathcal{S}) =\left(1-\delta^{\prime}\right) H_{\mathcal{S}}$, there exists $\tilde{\delta}$ such that the mutual information between the fragments $\mathcal{F}$ and $\mathcal{F}^{\prime}$ is bounded from below, $I(\mathcal{F}: \mathcal{F}^{\prime}) =$ $\left(1-\delta-\delta^{\prime}+\tilde{\delta}\right) H_{\mathcal{S}}$, with $\min( \delta, \delta^{\prime}) \geq \tilde{\delta} \geq 0$. The consensus deficit is then $\hat \delta = \delta+\delta^{\prime}-\tilde{\delta}$.
\end{theorem}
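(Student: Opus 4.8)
The plan is to turn the statement into a handful of entropy (in)equalities for the reduced states of the completely decohered branching state of Eq.~\eqref{branches}, and then squeeze $\tilde\delta$ between $0$ and $\min(\delta,\delta')$ using strong subadditivity. First I would write out the reduced operators: since $\bk{\cE_{/\cF \cF'_m}}{\cE_{/\cF \cF'_n}}=\delta_{mn}$ and the pointer states $\{\ket{s_n}\}$ are orthonormal, every off-diagonal term in $n$ drops out, so
\begin{equation}
\begin{aligned}
\rho_{\mc{FF}'}&=\sum_n q_n\,\ketbrad{\mc{F}_n}\otimes\ketbrad{\mc{F}'_n},\\
\rho_{\mc{SFF}'}&=\sum_n q_n\,\ketbrad{s_n}\otimes\ketbrad{\mc{F}_n}\otimes\ketbrad{\mc{F}'_n},
\end{aligned}
\end{equation}
with $\rho_{\mc{F}},\rho_{\mc{F}'}$ obtained by the remaining partial traces. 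Together with the stated equalities $H_{\mc{SF}}=H_{\mc{SF}'}=H_{\mc{SFF}'}=H_{\mc{S}}$, the definition $I(X:Y)=H_X+H_Y-H_{XY}$ immediately gives $I(\mc{F}:\mc{S})=H_{\mc{F}}$ and $I(\mc{F}':\mc{S})=H_{\mc{F}'}$, i.e. $H_{\mc{F}}=(1-\delta)H_{\mc{S}}$ and $H_{\mc{F}'}=(1-\delta')H_{\mc{S}}$. I would then simply \emph{define} $\tilde\delta$ by $H_{\mc{FF}'}=(1-\tilde\delta)H_{\mc{S}}$; substituting this into $I(\mc{F}:\mc{F}')=H_{\mc{F}}+H_{\mc{F}'}-H_{\mc{FF}'}$ gives $I(\mc{F}:\mc{F}')=(1-\delta-\delta'+\tilde\delta)H_{\mc{S}}$, and matching with $I(\mc{F}:\mc{F}')=(1-\hat\delta)H_{\mc{S}}$ yields $\hat\delta=\delta+\delta'-\tilde\delta$. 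So the entire content of the theorem reduces to the two-sided bound $0\le\tilde\delta\le\min(\delta,\delta')$, equivalently $\max(H_{\mc{F}},H_{\mc{F}'})\le H_{\mc{FF}'}\le H_{\mc{S}}$.

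For the upper bound I would use that $\rho_{\mc{FF}'}$ above is a convex mixture of the \emph{pure} states $\ket{\mc{F}_n}\ket{\mc{F}'_n}$ with weights $q_n$: the elementary mixing bound $S\!\left(\sum_n q_n\sigma_n\right)\le H(\{q_n\})+\sum_n q_n S(\sigma_n)$ with $S(\sigma_n)=0$ gives $H_{\mc{FF}'}\le H(\{q_n\})=H_{\mc{S}}$ (equivalently, $\mc{S}$ is classical in $\rho_{\mc{SFF}'}$, so $S(\mc{S}|\mc{FF}')\ge0$ and $H_{\mc{FF}'}\le H_{\mc{SFF}'}=H_{\mc{S}}$). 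Hence $\tilde\delta\ge0$, and in particular $\hat\delta\le\delta+\delta'$. The same mixing bound applied to $\rho_{\mc{F}}$ and $\rho_{\mc{F}'}$ shows $\delta,\delta'\ge0$, so $\min(\delta,\delta')$ is a genuine nonnegative target.

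For the lower bound I would invoke strong subadditivity, $H_{\mc{SF}}+H_{\mc{FF}'}\ge H_{\mc{SFF}'}+H_{\mc{F}}$; using $H_{\mc{SF}}=H_{\mc{SFF}'}=H_{\mc{S}}$ this collapses to $H_{\mc{FF}'}\ge H_{\mc{F}}$, and the same inequality with $\mc{F}\leftrightarrow\mc{F}'$ gives $H_{\mc{FF}'}\ge H_{\mc{F}'}$. Hence $1-\tilde\delta\ge1-\min(\delta,\delta')$, i.e. $\tilde\delta\le\min(\delta,\delta')$, which combined with the above also gives $\max(\delta,\delta')\le\hat\delta\le\delta+\delta'$: when each fragment has a small information deficit the consensus deficit is small too, which is the claim.

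The arithmetic is all short; the genuine ingredients are only (i) that complete decoherence is exactly what makes $\rho_{\mc{SFF}'}$ block diagonal in the pointer basis and hence forces the three system-containing entropies to equal $H_{\mc{S}}$, and (ii) that the lower bound is precisely strong subadditivity specialized to this state. Accordingly, the step to be most careful with is the subsystem bookkeeping in the SSA inequality --- verifying that one really lands on $H_{\mc{FF}'}\ge\max(H_{\mc{F}},H_{\mc{F}'})$ rather than a weaker Araki--Lieb-type estimate --- and, secondarily, checking that the block-diagonal form of $\rho_{\mc{SFF}'}$ indeed follows jointly from the orthonormality of $\{\ket{s_n}\}$ and the mutual orthogonality of the $\ket{\cE_{/\cF \cF'_n}}$.
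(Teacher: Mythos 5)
Your proposal is correct and follows the same skeleton as the paper's proof: both reduce everything to $I(\cF:\cF')=H_{\cF}+H_{\cF'}-H_{\cF\cF'}$ with $H_{\cF}=(1-\delta)H_{\cS}$, $H_{\cF'}=(1-\delta')H_{\cS}$, define $\tilde\delta$ through $H_{\cF\cF'}=(1-\tilde\delta)H_{\cS}$, and then read off $\hat\delta=\delta+\delta'-\tilde\delta$. The one substantive difference is in how the two-sided bound on $\tilde\delta$ is justified: the paper obtains $\tilde\delta\le\min(\delta,\delta')$ by appealing to $|\cF\cF'|\ge\max(|\cF|,|\cF'|)$, i.e., an informal ``a larger fragment knows at least as much'' monotonicity argument, and does not explicitly argue $\tilde\delta\ge 0$ at all; you instead derive the lower bound $H_{\cF\cF'}\ge\max(H_{\cF},H_{\cF'})$ from strong subadditivity combined with $H_{\mc{SF}}=H_{\mc{SF'}}=H_{\mc{SFF'}}$, and the upper bound $H_{\cF\cF'}\le H_{\cS}$ from the mixing bound for a convex combination of pure product states. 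Your SSA bookkeeping is correct ($H_{\mc{SFF'}}+H_{\cF}\le H_{\mc{SF}}+H_{\cF\cF'}$ collapses to $H_{\cF}\le H_{\cF\cF'}$), and it is in fact the rigorous content behind the paper's size-monotonicity appeal, since monotonicity of $I(\,\cdot\,:\cS)$ under partial trace is equivalent to SSA. So your version buys a self-contained, assumption-explicit proof of both inequalities, at the cost of writing out the reduced density operators; the paper's version is shorter but leans on monotonicity stated in words and leaves $\tilde\delta\ge 0$ implicit in the block-diagonal structure of $\rho_{\mc{SFF'}}$.
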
 

\begin{proof}
The proof is based on the branching structure of the state $\mc{SE}$,~Eq.~\eqref{branches}. Pointer states $|s_n\rangle$ of $\mathcal{S}$ are orthogonal. We also assume that the reminder of the environment $\cE_{/{\cF\cF'}}$ (that is, $\mathcal{E}$ less the composite fragment $\mathcal{F} \mathcal{F}^{\prime}$) is, e.g., large enough to remove off-diagonal terms in the pointer basis representation of $\mathcal{S}$. Then $H_{\mathcal{S F}}=H_{\mathcal{S F ^ { \prime }}}=H_{\mathcal{S F F ^ { \prime }}}=H_{\mathcal{S}}$, and
$$
H_{\mathcal{F}}=(1-\delta) H_{\mathcal{S}}, \quad H_{\mathcal{F}^{\prime}}=\left(1-\delta^{\prime}\right) H_{\mathcal{S}}.
$$
Moreover, as $I(\mathcal{F \mathcal { F } ^ { \prime }}: \mathcal{S})=(1-\tilde{\delta}) H_{\mathcal{S}}$, we get
$$
H_{\mathcal{F} \mathcal{F}^{\prime}}=(1-\tilde{\delta}) H_{\mathcal{S}},
$$
where $\tilde{\delta}$ is the information deficit of a fragment representing the union of $\mathcal{F}$ and $\mathcal{F}^{\prime}$. As $\left|\mathcal{F} \mathcal{F}^{\prime}\right| \geq \max \left(|\mathcal{F}|,\left|\mathcal{F}^{\prime}\right|\right)$ it follows that $\tilde{\delta} \leq \min \left(\delta, \delta^{\prime}\right)$. Hence;
$$
I(\mathcal{F}: \mathcal{F}^{\prime})=\left(1-\delta-\delta^{\prime}+\tilde{\delta}\right) H_{\mathcal{S}},
$$
or
$$
I(\mathcal{F}: \mathcal{F}^{\prime})=(1-\hat \delta) H_{\mathcal{S}},
$$
where $\hat \delta = \delta+\delta^{\prime}-\tilde{\delta}$ can be viewed as consensus deficit between the two fragments.
\end{proof}
We conclude that when two fragments $\cF$ and $\cF'$ have information sufficient to infer the state of the system, they will agree on what that state is.
The above result depends only on the branching structure of the state of $\mathcal{S E}$, and on the assumption that the ``reminder of the environment'' decoheres $\mathcal{S}$, so that $H_{\mathcal{S F}}=H_{\mathcal{S} \mathcal{F}^{\prime}}=H_{\mathcal{S F F}}=H_{\mathcal{S}}$. The consensus ${\mathfrak c}(\cF :\cF')$ between the records in the environment fragments is then directly related to their information deficits, regardless of the size of these deficits. 


Note that discussions of Quantum Darwinism often rely on the assumptions that fragments are ``typical'', and that the information about $\mathcal{S}$ in each fragment depends only on its size - on the number of the environment subsystems. The proof above does not make such assumptions. 


\section{Mutual information and consensus in a simple model}
To illustrate our conclusions we revisit the {\tt c-maybe} model~\cite{touil2022} - a single qubit (system $\mathcal{S}$) coupled to non-interacting qubits in the environment $\mathcal{E}$ via imperfect {\tt c-not} gates, as depicted in Fig.~\ref{fig:enlarged_system}. This model is analytically solvable. Surprisingly, in spite of its simplicity, it results in expressions for mutual information $I(\cS : \cF)$ that appear in realistic photon scattering models~\cite{Riedel2010PRL,Riedel2011NJP,touil2022}. 
\begin{figure}[h!]
\includegraphics[scale=0.38]{./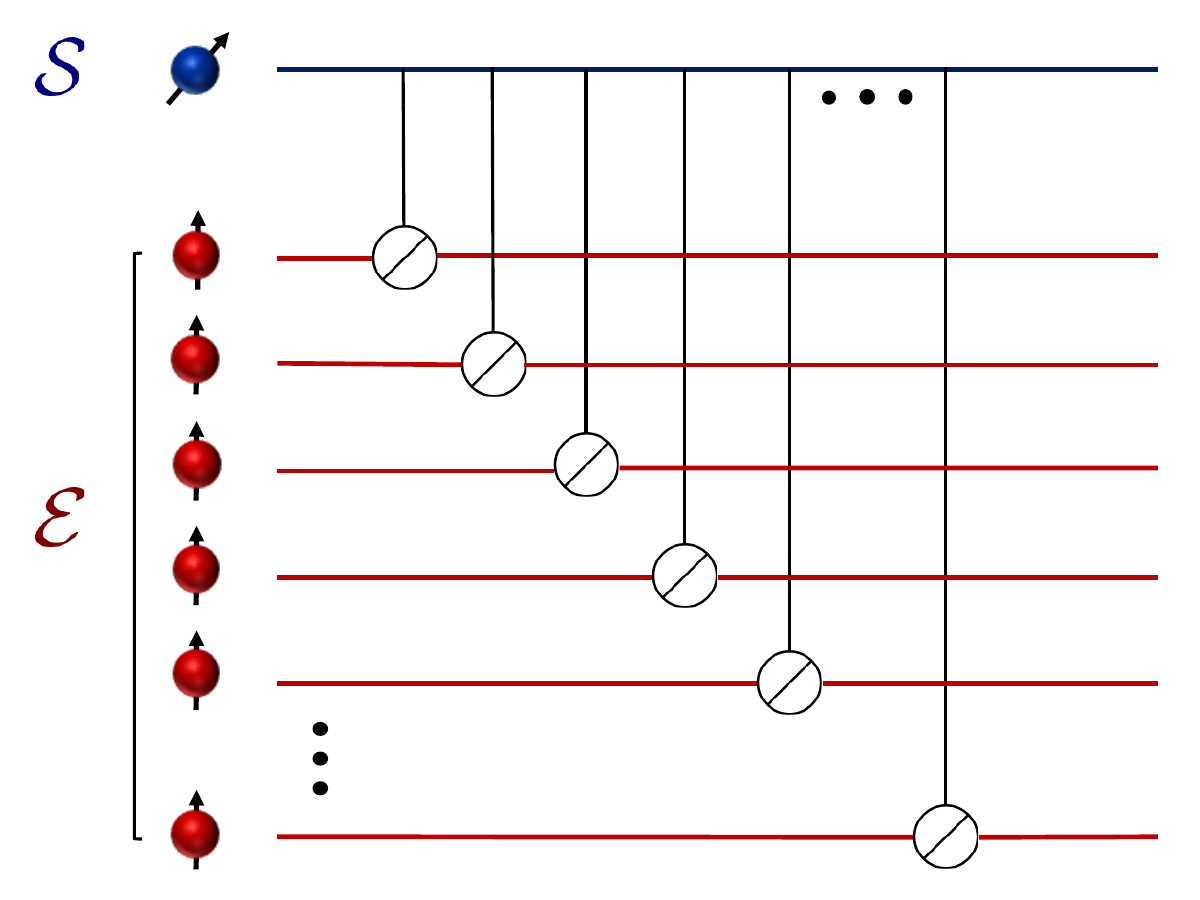}
\caption{\justifying
The interactions between the system and each qubit of the environment in the {\tt c-maybe} model~\cite{touil2022}. The system starts in a superposition of pointer states $|0\rangle$ and $|1\rangle$, and the environment qubits start in the all-up state. In contrast to perfect {\tt c-not} gates, the {\tt c-maybe}  gates $U_{\oslash}$ (illustrated with the ``$\oslash$'' symbol) rotate the states of the environment qubits into a superposition of $|0\rangle$ and $|1\rangle$ (see Eq.~\eqref{ope}). This leads to the branching states that support the emergence of classicality~\cite{touil2022branching}.}
\label{fig:enlarged_system}
\end{figure}
\begin{figure*}[ht]
    \includegraphics[width=\textwidth]{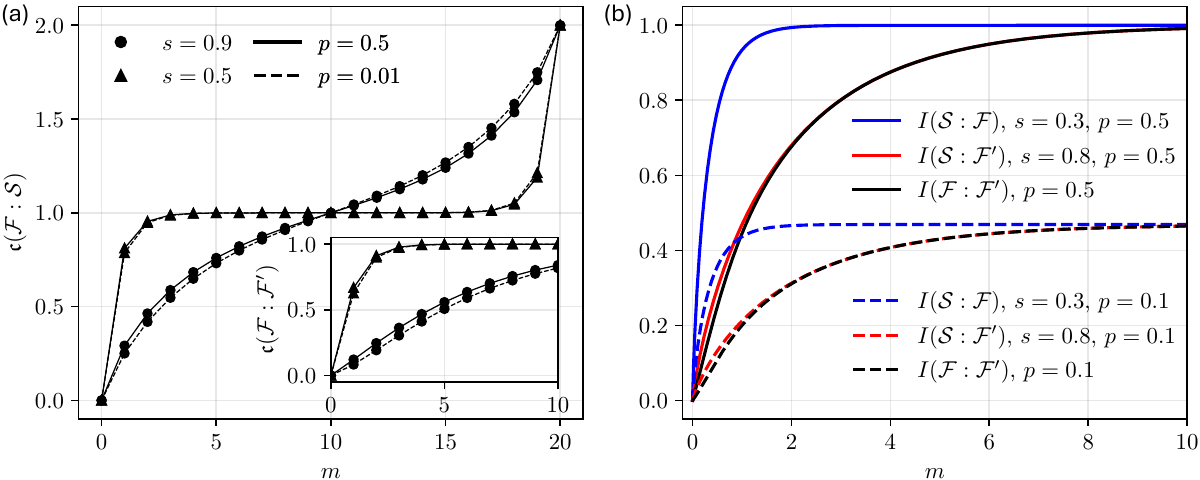}
    \caption{\justifying (a) Correlation measure ${\mathfrak c}(\cF : \cS) = I(\cS:\cF) / H_\cS$ and consensus measure ${\mathfrak c}(\cF : \cF') = I(\cF : \cF') / H_{\cS}$ (bottom right inset) as functions of qubits $m$ in fragment $\mc{F}$ for the {\tt c-maybe} model with various probabilities $p$ (see Eq.~\eqref{cmaybe}) and interaction parameters $s$. The parameter $s$ quantifies information transfer between $\cS$ and $\cE$, with $s=0$ indicating perfect transfer (generalized GHZ state) and $s=1$ indicating no coupling. Here, we consider two disjoint fragments $\mc{F}$ and  $\mc{F}'$ such that $|\mc{F}|=|\mc{F}'|=m$. For this plot, the environment is composed of $N=20$ spins. (b) Plots of mutual informations $I(\mc{S}:\mc{F})$, $I(\mc{S}:\mc{F}')$, and $I(\mc{F}:\mc{F}')$ where $\mc{F}$ and $\mc{F}'$ are non-identical fragments. Information transfer between $\mc{S}$-$\mc{F}$ and $\mc{S}$-$\mc{F}'$ is quantified by $s=0.3$ and $s=0.8$, respectively. These plots show how the rise to the plateau in $I(\mc{S}:\mc{F})$ and $I(\mc{S}:\mc{F}')$ leads to the rise of consensus $I(\mc{F}:\mc{F}')$ without assuming fragment typicality. For this plot, the environment is composed of $N=100$ spins.}
\label{fig01}
\end{figure*}
\begin{figure}[ht]
    \includegraphics[width=\linewidth]{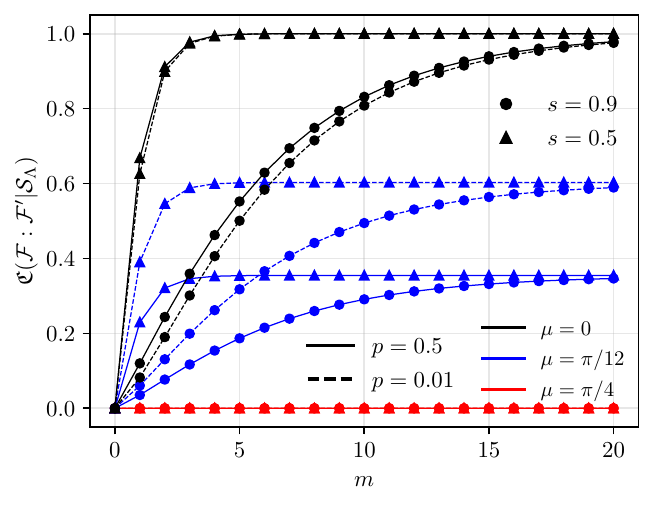}
    \caption{\justifying Plots of the consensus measure ${\mathfrak C}(\cF : \cF' | \cS_\Lambda)$ for different $s$ and $p$, and projections on $\mc{S}$. Specifically, $\mu=0$ corresponds to projections onto the pointer basis, $\mu=\pi/12$ to a basis rotated by the angle $\pi/6$, and $\mu=\pi/4$ to the complementary basis. Projections with $\mu=\pi/4$ reveal zero Holevo information~\cite{Zwolak2013SR}. In particular we project the system onto the basis given by $|+\rangle=$ $\cos (\mu)|0\rangle+ \sin (\mu)|1\rangle$ and $|-\rangle=\sin (\mu)|0\rangle- \cos (\mu)|1\rangle)$. Here, we consider two disjoint fragments $\mc{F}$ and  $\mc{F}'$ such that $|\mc{F}|=|\mc{F}'|=m$ and the environment is composed of $N=100$ spins.}
    \label{fig02}
\end{figure}

The system $\mc{S}$ is a qubit coupled to $N$ independent non-interacting qubits of the environment  $\mc{E}$ via a {\tt c-maybe} gate with the truth table,
\begin{equation}
U_{\oslash}=\begin{pmatrix}
1 & 0 & 0 & 0 \\
0 & 1 & 0 & 0 \\
0 & 0 & s & c \\
0 & 0 & c & -s
\end{pmatrix}.
\label{ope}
\end{equation}
The parameters $c=\cos(a)$ and $s=\sin(a)$ quantify the imperfect transfer of information between system and environment, such that $a$ is the angle associated with the action by which the target qubit is rotated. Now, we assume that the system starts in the initial state $\sqrt{p} |0\rangle + \sqrt{1-p} |1\rangle $ and the environment state is initialized in the all-up state $|00...0\rangle$. After the coupling, the joint $\mathcal{SE}$ state then assumes the branching form:
\begin{equation}
|\Psi_{\mc{SE}}\rangle =  \sqrt{p }|0\rangle \bigotimes^{N}_{k=1} |\varepsilon^{k}_0\rangle + \sqrt{1-p }|1\rangle \bigotimes^{N}_{k=1} |\varepsilon^{k}_1\rangle,
\label{cmaybe}
\end{equation}
where $|\varepsilon_{n}^{k}\rangle$ are individual sub-environment states representing the state of each qubit in the environment, such that $|\langle \varepsilon_{m}^{k}|\varepsilon_{n}^{k}\rangle|=s$ when $m\neq n$. In Ref.~\cite{touil2022}, it was shown that 
\begin{equation}
 I\left(\mathcal{S}: \mathcal{F}\right)=h\left(\lambda_{N, p}^{+}\right)+h\left(\lambda_{m, p}^{+}\right)-h\left(\lambda_{N-m, p}^{+}\right),
\end{equation}
where $m=|\mc{F}|$, $h(x)=-x \log_{2}(x)-(1-x) \log_{2}(1-x)$, $\lambda_{k, p}^{ \pm}=\frac{1}{2}\left(1 \pm \sqrt{(q-p)^2+4 s^{2 k} p q}\right)$ and $q=1-p$. Hence, near the plateau we have 
\begin{equation}
I\left(\mathcal{S}: \mathcal{F}\right) \approx  H_{\mc{S}}- \frac{pq}{|q-p|}|\log_{2}(\frac{q}{p})|s^{2m}.
\end{equation}
and;
\begin{equation}
I\left(\mathcal{S}: \mathcal{F}'\right) \approx H_{\mc{S}}- \frac{pq}{|q-p|}|\log_{2}(\frac{q}{p})|s^{2m'},
\end{equation}
where $m'=|\mc{F}'|$. Therefore,
\begin{equation}
\begin{split}
I\left(\mathcal{F}: \mathcal{F}^{'}\right) &\approx  H_{\mc{S}}\\ &- \frac{pq}{|q-p|}|\log_{2}(\frac{q}{p})|\left(s^{2m}+ s^{2m^{'}}-s^{2(m+m^{'})} \right).
\end{split}
\end{equation}
The above expression explicitly shows the dependence of the information deficits $\delta$, $\delta'$, and $\tilde{\delta}$ on the parameters of the {\tt c-maybe} model, i.e., $s$ and $m$. In particular, we get
\begin{equation}
\begin{split}
\delta&= \left(\frac{pq}{H_{\cS} |q-p|}|\log_{2}(\frac{q}{p})| \right) s^{2m}, \\
\delta'&= \left(\frac{pq}{H_{\cS} |q-p|}|\log_{2}(\frac{q}{p})| \right) s^{2m'}, \\
\tilde{\delta}&= \left(\frac{pq}{H_{\cS} |q-p|}|\log_{2}(\frac{q}{p})| \right) s^{2(m'+m)}.\\
\end{split}
\end{equation}
The resulting behavior of the consensus measures ${\mathfrak c}(\cF :\cF')$ and ${\mathfrak c}(\cS :\cF)$ are illustrated in Fig.~\ref{fig01}.


To establish Theorem 1 we have assumed that state $\cS\cE$ has branching structure, and that the remainder of the environment $\cE_{/{\cF\cF'}}$ (that is, $\mathcal{E}$ less the composite fragment $\mathcal{F} \mathcal{F}^{\prime}$) is sufficiently large to decohere these branches, so that $H_{\mathcal{S F}}=H_{\mathcal{S F ^ { \prime }}}=H_{\mathcal{S F F ^ { \prime }}}=H_{\mathcal{S}}$. For the {\tt c-maybe} model we have the following analytic expressions:
\begin{equation}
\begin{split}
H_{\mc{SF}}&=h\left(\lambda_{N-m, p}^{+}\right),\\
H_{\mc{SF}'}&=h\left(\lambda_{N-m', p}^{+}\right),\\
H_{\mc{SFF}'}&=h\left(\lambda_{N-(m+m'), p}^{+}\right),
\end{split}
\end{equation}
while
\begin{equation}
H_{\mc{S}}=h\left(\lambda_{N, p}^{+}\right).
\end{equation}
Therefore, when the environment is not sufficiently large for its remainder to decohere $\cS\cF\cF'$, this will not be true. For a large, but not infinite environment, there exists $\epsilon$, $\epsilon'$ and $\tilde{\epsilon}$ such that
\begin{equation}
\begin{split}
H_{\mc{SF}}&=(1-\epsilon)H_{\mc{S}},\\
H_{\mc{SF}'}&=(1-\epsilon')H_{\mc{S}},\\
H_{\mc{SFF}'}&=(1-\tilde{\epsilon})H_{\mc{S}}.
\end{split}
\label{epsi}
\end{equation}
In fact, we have
\begin{equation}
\begin{split}
\epsilon &=\frac{pq}{H_{\cS}|q-p|}|\log_{2}(\frac{q}{p})|\left(s^{2N}-s^{2(N-m)}\right),\\
\epsilon' &=\frac{pq}{H_{\cS}|q-p|}|\log_{2}(\frac{q}{p})|\left(s^{2N}-s^{2(N-m')}\right),\\
\tilde{\epsilon} &=\frac{pq}{H_{\cS}|q-p|}|\log_{2}(\frac{q}{p})|\left(s^{2N}-s^{2(N-m-m')}\right).\\
\end{split}
\end{equation}
This leads to the following general Lemma 1:
\begin{lemma}
    For a finite-size environment, when $I(\mathcal{F}: \mathcal{S})=(1-\delta) H_{\mathcal{S}}, I(\mathcal{F}^{\prime}:\mathcal{S}) =\left(1-\delta^{\prime}\right) H_{\mathcal{S}}$, there exists $\hat{\delta}$ and $\hat{\epsilon}$ such that the mutual information between the fragments $\mathcal{F}$ and $\mathcal{F}^{\prime}$ is bounded from below, $I(\mathcal{F}: \mathcal{F}^{\prime}) =$ $\left(1-\hat{\delta}-\hat{\epsilon}\right) H_{\mathcal{S}}$.
\end{lemma}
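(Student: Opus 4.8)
The plan is to rerun the bookkeeping of Theorem~1, but this time keeping the finite-size corrections $\epsilon,\epsilon',\tilde\epsilon$ of \eqref{epsi} instead of discarding them. First I would turn the three hypotheses into expressions for the reduced fragment entropies. Combining $I(\mathcal{F}:\mathcal{S})=H_{\mathcal{F}}+H_{\mathcal{S}}-H_{\mathcal{SF}}$ with $I(\mathcal{F}:\mathcal{S})=(1-\delta)H_{\mathcal{S}}$ and $H_{\mathcal{SF}}=(1-\epsilon)H_{\mathcal{S}}$ gives $H_{\mathcal{F}}=(1-\delta-\epsilon)H_{\mathcal{S}}$, and likewise $H_{\mathcal{F}'}=(1-\delta'-\epsilon')H_{\mathcal{S}}$. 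Introducing $\tilde\delta$ through $I(\mathcal{FF}':\mathcal{S})=(1-\tilde\delta)H_{\mathcal{S}}$ and using $H_{\mathcal{SFF}'}=(1-\tilde\epsilon)H_{\mathcal{S}}$ gives $H_{\mathcal{FF}'}=(1-\tilde\delta-\tilde\epsilon)H_{\mathcal{S}}$.

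Second, I would substitute into $I(\mathcal{F}:\mathcal{F}')=H_{\mathcal{F}}+H_{\mathcal{F}'}-H_{\mathcal{FF}'}$; the $H_{\mathcal{S}}$ prefactors collect to $I(\mathcal{F}:\mathcal{F}')=\bigl(1-(\delta+\delta'-\tilde\delta)-(\epsilon+\epsilon'-\tilde\epsilon)\bigr)H_{\mathcal{S}}$, which is the asserted form under the natural identification $\hat\delta=\delta+\delta'-\tilde\delta$ — the same consensus deficit appearing in Theorem~1 — and $\hat\epsilon=\epsilon+\epsilon'-\tilde\epsilon$, a purely finite-size term that vanishes once the remainder of the environment fully decoheres $\mathcal{SFF}'$ (i.e.\ $\epsilon=\epsilon'=\tilde\epsilon=0$), recovering Theorem~1 as a limiting case.

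Third, for the "bounded from below" content I would pin down the signs. For $\hat\delta$ the Theorem~1 argument carries over verbatim: since $|\mathcal{FF}'|\ge\max(|\mathcal{F}|,|\mathcal{F}'|)$, monotonicity of the mutual information under discarding a subsystem gives $I(\mathcal{FF}':\mathcal{S})\ge I(\mathcal{F}:\mathcal{S}),\,I(\mathcal{F}':\mathcal{S})$, hence $\tilde\delta\le\min(\delta,\delta')$ and $\hat\delta\ge\max(\delta,\delta')\ge0$; on the classical plateau $\tilde\delta\ge0$ as well, so also $\hat\delta\le\delta+\delta'$. For $\hat\epsilon$, I would rewrite $\hat\epsilon=1-\bigl(H_{\mathcal{SF}}+H_{\mathcal{SF}'}-H_{\mathcal{SFF}'}\bigr)/H_{\mathcal{S}}$ and apply strong subadditivity to the triple $(\mathcal{F},\mathcal{S},\mathcal{F}')$, namely $H_{\mathcal{SF}}+H_{\mathcal{SF}'}\ge H_{\mathcal{SFF}'}+H_{\mathcal{S}}$, which forces $\hat\epsilon\le0$. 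Combining, $I(\mathcal{F}:\mathcal{F}')=(1-\hat\delta-\hat\epsilon)H_{\mathcal{S}}\ge(1-\delta-\delta')H_{\mathcal{S}}$. Finally I would specialize to the {\tt c-maybe} model, where the closed forms $H_{\mathcal{S}}=h(\lambda^{+}_{N,p})$, $H_{\mathcal{SF}}=h(\lambda^{+}_{N-m,p})$, and so on, reproduce the $\epsilon,\epsilon',\tilde\epsilon$ quoted above and render $\hat\epsilon\propto-(1-s^{2m})(1-s^{2m'})s^{2(N-m-m')}\le0$ explicit, consistent with the general bound.

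The identity itself is just algebra, so I do not expect a genuine obstacle there; the one point requiring care is justifying the inequalities in full generality rather than from the model. In particular $\tilde\delta\ge0$ (equivalently $I(\mathcal{FF}':\mathcal{S})\le H_{\mathcal{S}}$) holds only on the classical plateau, i.e.\ for $|\mathcal{FF}'|$ below roughly half of $\mathcal{E}$; past that point the anti-symmetry of the partial-information plot breaks down and the clean lower bound $(1-\delta-\delta')H_{\mathcal{S}}$ can fail, so I would state that regime as an explicit hypothesis. The sign of $\hat\epsilon$, on the other hand, is unconditional, being nothing more than strong subadditivity.
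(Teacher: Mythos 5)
Your derivation of the identity is exactly the paper's: the same substitutions $H_{\cF}=(1-\delta-\epsilon)H_{\cS}$, $H_{\cF'}=(1-\delta'-\epsilon')H_{\cS}$, $H_{\cF\cF'}=(1-\tilde\delta-\tilde\epsilon)H_{\cS}$ into $I(\cF:\cF')=H_{\cF}+H_{\cF'}-H_{\cF\cF'}$, and the same identifications $\hat\delta=\delta+\delta'-\tilde\delta$, $\hat\epsilon=\epsilon+\epsilon'-\tilde\epsilon$, with Theorem~1 recovered when $\epsilon=\epsilon'=\tilde\epsilon=0$. Where you genuinely differ is in the sign analysis. The paper asserts the single combined inequality $\tilde\delta+\tilde\epsilon\le\min(\delta+\epsilon,\delta'+\epsilon')$ from $|\cF\cF'|\ge\max(|\cF|,|\cF'|)$, i.e.\ from a monotonicity-in-fragment-size heuristic for $H_{\cF\cF'}$ versus $H_{\cF}$, $H_{\cF'}$, and stops there. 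You instead split the bound into two pieces, each with a model-independent justification: $\tilde\delta\le\min(\delta,\delta')$ from data processing applied to $I(\,\cdot\,:\cS)$ under discarding a subsystem, and $\hat\epsilon\le 0$ from strong subadditivity --- indeed $\hat\epsilon\,H_{\cS}=H_{\cS}-H_{\cS\cF}-H_{\cS\cF'}+H_{\cS\cF\cF'}=-I(\cF:\cF'|\cS)$, which makes $(1-\hat\delta)H_{\cS}$ the refined mutual information of Eq.~\eqref{refine} and ties the Lemma neatly to the later sections. This buys you an unconditional statement for $\hat\epsilon$ and a rigorous one for $\tilde\delta\le\min(\delta,\delta')$, at the price of needing the extra hypothesis $\tilde\delta\ge0$ (the classical-plateau regime $|\cF\cF'|\lesssim|\cE|/2$) to conclude $I(\cF:\cF')\ge(1-\delta-\delta')H_{\cS}$; your flagging of that caveat is correct and is not made explicit in the paper. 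One remark on the {\tt c-maybe} check: your sign $\hat\epsilon\propto-(1-s^{2m})(1-s^{2m'})\,s^{2(N-m-m')}\le0$ is the self-consistent one, since $H_{\cS\cF}=(1-\epsilon)H_{\cS}\le H_{\cS}$ forces $\epsilon\ge0$, i.e.\ $\epsilon\propto s^{2(N-m)}-s^{2N}$; the displayed expressions in the text carry the opposite (negative) sign, so trust your version rather than the quoted one.
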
 

\begin{proof}
The proof is based on the branching structure of the state $\mc{SE}$,~Eq.~\eqref{branches}. Unlike the proof for Theorem 1, we only assume that the pointer states $|s_n\rangle$ of $\mathcal{S}$ are orthogonal. In other words, the reminder of the environment $\cE_{/{\cF\cF'}}$ is not large enough to remove off-diagonal terms in the pointer basis representation of $\mathcal{S}$. Then, from~Eq.~\eqref{epsi}, we get
$$
H_{\mathcal{F}}=(1-\delta-\epsilon) H_{\mathcal{S}}, \quad H_{\mathcal{F}^{\prime}}=\left(1-\delta^{\prime}-\epsilon^{\prime}\right) H_{\mathcal{S}}.
$$
Moreover, as $I(\mathcal{F \mathcal { F } ^ { \prime }}: \mathcal{S})=(1-\tilde{\delta}) H_{\mathcal{S}}$, we get
$$
H_{\mathcal{F} \mathcal{F}^{\prime}}=(1-\tilde{\delta}-\tilde{\epsilon}) H_{\mathcal{S}}.
$$
As $\left|\mathcal{F} \mathcal{F}^{\prime}\right| \geq \max \left(|\mathcal{F}|,\left|\mathcal{F}^{\prime}\right|\right)$ it follows that $\tilde{\delta}+\tilde{\epsilon} \leq \min \left(\delta+\epsilon, \delta^{\prime}+\epsilon^{\prime}\right)$. Hence,
$$
I(\mathcal{F}: \mathcal{F}^{\prime})=\left(1-\delta-\epsilon-\delta^{\prime}-\epsilon^{\prime}+\tilde{\delta}+\tilde{\epsilon}\right) H_{\mathcal{S}}.
$$
or
$$
I(\mathcal{F}: \mathcal{F}^{\prime})=\left(1-\hat{\delta}-\hat{\epsilon}\right) H_{\mathcal{S}}.
$$
\end{proof}

 Theorem 1 was deduced under the assumption that $H_{\mathcal{S F}}=H_{\mathcal{S F ^ { \prime }}}=H_{\mathcal{S F F ^ { \prime }}}=H_{\mathcal{S}}$.  When the environment is in effect infinite (as in everyday settings, as recognized by the assumptions of Theorem 1) the conditions that lead to the limit on the consensus deficit are met. Lemma1 explores the consequences of relaxing of this assumption. 

\section{The refined Mutual Information and consensus} 

We have seen that, for branching states, the mutual information-based measure introduced in the last section can successfully quantify the consensus between two environment fragments. 
Branching states~(\ref{branches}) are a useful simplifying assumption that allows one to capture the essence of information flows relevant for Quantum Darwinism and for the transition from quantum to classical. However, branching states ignore the possibility that the environment fragments may be correlated with systems other than $\cS$, or simply with each other when there is direct interaction between them~\cite{touil2022,duruisseau2023pointer}. This raises the possibility that the information they share may be about something else than the system of interest $\cS$. 


To determine whether the information in $\cF$ and $\cF'$ leads to the same conclusion about the state of $\cS$ we consider {\it refined mutual information} defined as the difference between the mutual information and the conditional mutual information:

\begin{equation}
\mathfrak{I} (\cF : \cF' | \cS) = I (\cF : \cF') - I(\cF : \cF' | \cS) .
\label{refine}
\end{equation}
Refined mutual information is sometimes called ``interaction'' in information theory~\cite{CoverThomas}. For obvious reasons we will not adopt this nomenclature in a physics paper. Our motivation for considering ${\mathfrak I} (\cF : \cF' | \cS)$ is clear: The information about $\cS$ in the fragments of its environment is the focus of Quantum Darwinism. Yet, $ I (\cF : \cF') $ is {\it all} the information shared by the two fragments, so it may include spurious information, that is ``not just about $\cS$''. For the states in~Eq.~\eqref{branches}, the conditional mutual information:
\begin{equation}
I(\cF : \cF' |  \cS) = H_{\cF| \cS} + H_{\cF' |  \cS} - H_{\cF\cF'| \cS} 
\label{condmut}
\end{equation}
is the information shared by $\cF$ and $\cF'$ that is {\it not} about a certain observable of $\cS$: When $\cF$ and $\cF'$ are part of a branching state~\eqref{branches}, they share information about the pointer observable of $\cS$ and nothing else. Therefore, knowing the state of $\cS$ reveals their states: Their conditional entropies vanish,
\begin{equation}
H_{\cF|\cS} = H_{\cF'|\cS} = H_{\cF\cF'| \cS} = 0,
\end{equation}
since their conditional states are pure. Thus, in a branching state, $\cF$ and $\cF'$ know only about the pointer observable of $\cS$. Hence $I(\cF : \cF' | \cS)=0$ and 
$ {\mathfrak I} (\cF : \cF' | \cS) = I (\cF : \cF').$
By contrast, when $\cF$ and $\cF'$ are correlated, but their states are not correlated with $\cS$, conditioning on the state of $\cS$ has no effect, so that $ I (\cF : \cF') = I(\cF : \cF' | \cS)$. Intermediate cases (e.g., where some of the information is about $\cS$) are also possible~\cite{darwin5}. 

Therefore, a measure of consensus about the state of the system that is not limited in its applicability to branching states is based on the refined mutual information:
\begin{equation}
{\mathfrak C }(\cF : \cF' | \cS) =
{\mathfrak I} (\cF : \cF' | \cS) 
/ H_\cS.
\end{equation}
For branching states $I(\cF : \cF' | \cS)=0$, and ${\mathfrak c}(\cF : \cF') = {\mathfrak C }(\cF : \cF'| \cS)=I(\cF:\cF') / H_\cS$ for projections of $\mc{S}$ onto the pointer basis.

Above we have assumed an optimal measurement 
which eliminates the need to specify the observable measured on $\cS$ explicitly. Observers have access to arbitrary measurements which may involve POVMs. One could also consider:
\begin{equation}
{\mathfrak C }(\cF : \cF' |  \cS_\Lambda)= \
{\mathfrak I} (\cF : \cF' | \cS_\Lambda)
/ H_\cS.
\end{equation}
The observable-dependent refined mutual information quantifies consensus between $\cF$ and $\cF'$ about an observable ${\Lambda}$ of the system $\cS$, as indicated by “$\cS_\Lambda $”. It involves an observable-dependent conditional entropy. 

The conditional mutual information required for the definition of ${\mathfrak C }(\cF : \cF'|{\cS}_\Lambda)$ presupposes a measurement. Hence, it will depend on the measured observable. When $\cF$ and $\cF'$ in the perfect branching state,~Eq.~\eqref{branches}, are conditioned on the pointer observable $\Pi$ (with eigenstates $|s_n\rangle$) of the system, their states are pure. Therefore, $ H_{\cF|{\cS}_\Pi} = H_{\cF'|{\cS}_\Pi} = H_{\cF\cF'|{\cS}_\Pi} = 0$, so that $I(\cF : \cF' | {\cS}_\Pi)=0$. The consensus ${\mathfrak C }(\cF : \cF'|{\cS}_\Pi)$ is then maximized. By contrast, when an observable complementary to $\Pi$ is measured on $\cS$, it does not reveal anything about the states of the fragments \footnote{it might reveal the state of the whole environment, however!}. In that case $I(\cF : \cF' | {\cS})= I(\cF : \cF')$, so there is no consensus about an observable complementary to the pointer observable $\Pi$. Conditioning on other observables of $\cS$ leads to intermediate values of consensus. 

Refined mutual information ${\mathfrak I }(\cF : \cF' | \cS_\Lambda)$ can take negative values. This happens when conditioning increases the mutual information between the two fragments. This might complicate 
its interpretation as a measure of consensus between fragments. However, 
when wavefunction admits the branching structure, refined mutual information is always positive under the assumption that the rest of the environment and one of the fragments ($\cF$ or $\cF'$) have orthogonal states in the different branches of the wave function (i.e. large enough to ensure perfect record states). This is established by the theorem.

\begin{theorem}
    Consider the singly-branching structure,~Eq.~\eqref{branches} of the global wave function $|\psi_{\mc{SE}}\rangle$. Independent of the measurements applied on $\mc{S}$, refined mutual information ${\mathfrak I }(\cF : \cF' | \cS_\Lambda)$ is always positive when the fragment $\mc{F}$ and the rest of the environment $\cE_{/\cF \cF'}$ have orthogonal states in the different branches of the wave function $|\psi_{\mc{SE}}\rangle$.
\end{theorem}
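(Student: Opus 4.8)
\emph{Proof proposal.} The plan is to reduce the statement to the non-negativity of a Holevo quantity for the fragment $\cF'$. First I would compute the reduced state of the composite fragment. Starting from the branching form~\eqref{branches} and tracing out $\cS$ (whose pointer states $\ket{s_n}$ are orthogonal) together with the remainder $\cE_{/\cF\cF'}$ (whose states are orthogonal across branches by hypothesis), every cross term in the branch index drops out, leaving
\begin{equation}
\rho_{\cF\cF'}=\sum_n q_n\,\kb{\cF_n}{\cF_n}\otimes\kb{\cF'_n}{\cF'_n}.
\end{equation}
Because the $\ket{\cF_n}$ are orthonormal (the other hypothesis), the vectors $\ket{\cF_n}\otimes\ket{\cF'_n}$ form an orthonormal eigenbasis, so $\rho_{\cF\cF'}$ has eigenvalues exactly $\{q_n\}$; likewise $\rho_\cF=\sum_n q_n\kb{\cF_n}{\cF_n}$ has eigenvalues $\{q_n\}$. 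Hence $H_{\cF\cF'}=H_\cF$ and therefore $I(\cF:\cF')=H_{\cF'}$, with $\rho_{\cF'}=\sum_n q_n\kb{\cF'_n}{\cF'_n}$.

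Next I would run the same computation on the post-measurement states. For an arbitrary measurement $\Lambda=\{E_i\}$ on $\cS$ (with $\sum_i E_i=\id$), outcome $i$ has probability $p_i=\sum_n q_n\bra{s_n}E_i\ket{s_n}$, and the conditional state of $\cF\cF'$ — after again invoking orthogonality of the $\cE_{/\cF\cF'}$ states to erase the coherences the measurement generates — is
\begin{equation}
\rho^{(i)}_{\cF\cF'}=\sum_n q^{(i)}_n\,\kb{\cF_n}{\cF_n}\otimes\kb{\cF'_n}{\cF'_n},\qquad q^{(i)}_n=\frac{q_n\bra{s_n}E_i\ket{s_n}}{p_i}.
\end{equation}
This has exactly the structure of $\rho_{\cF\cF'}$, so the same argument gives $I(\cF:\cF')_{\rho^{(i)}}=H(\rho^{(i)}_{\cF'})$ with $\rho^{(i)}_{\cF'}=\sum_n q^{(i)}_n\kb{\cF'_n}{\cF'_n}$. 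Averaging over outcomes yields the observable-dependent conditional mutual information $I(\cF:\cF'|\cS_\Lambda)=\sum_i p_i H(\rho^{(i)}_{\cF'})$, and using $\sum_i E_i=\id$ one checks $\sum_i p_i\rho^{(i)}_{\cF'}=\rho_{\cF'}$. Therefore
\begin{equation}
{\mathfrak I}(\cF:\cF'|\cS_\Lambda)=I(\cF:\cF')-I(\cF:\cF'|\cS_\Lambda)=H\Big(\sum_i p_i\rho^{(i)}_{\cF'}\Big)-\sum_i p_i H(\rho^{(i)}_{\cF'})\;\ge\;0,
\end{equation}
by concavity of the von Neumann entropy; the right-hand side is precisely the Holevo quantity $\rchi$ of the ensemble $\{p_i,\rho^{(i)}_{\cF'}\}$, manifestly non-negative and valid for any measurement (not only the optimal one).

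The step I expect to be the crux is the second one. A generic measurement on $\cS$ does not commute with the pointer basis, so $(M_i\otimes\id)\ket{\Psi_{\cS\cE}}$ carries off-diagonal branch contributions proportional to $\bra{s_m}E_i\ket{s_n}$ with $m\neq n$ that would survive the partial trace over $\cS$; it is exactly the assumed orthogonality $\bk{\cE_{/\cF\cF'_m}}{\cE_{/\cF\cF'_n}}=\delta_{mn}$ that annihilates these coherences and keeps $\rho^{(i)}_{\cF\cF'}$ block-diagonal in the branch index. Without it, $I(\cF:\cF')_{\rho^{(i)}}$ would not collapse to $H(\rho^{(i)}_{\cF'})$ and the concavity bound would fail — which is why both hypotheses are genuinely used: orthonormality of the $\ket{\cF_n}$ to get $H_{\cF\cF'}=H_\cF$, and orthogonality of the $\cE_{/\cF\cF'}$ states to neutralize measurement-induced coherences. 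The remaining bookkeeping (normalizations, the identity $\sum_i p_i\rho^{(i)}_{\cF'}=\rho_{\cF'}$, and rewriting $I(\cF:\cF'|\cS_\Lambda)$ in terms of the averaged entropies) is routine.
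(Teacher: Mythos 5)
Your proposal is correct and follows essentially the same route as the paper's proof: reduce $I(\cF:\cF')$ to $H_{\cF'}$ using the block-diagonal structure of $\rho_{\cF\cF'}$, show the conditional states retain that structure for an arbitrary measurement, and conclude by concavity of the von Neumann entropy (i.e., non-negativity of the Holevo quantity of the ensemble $\{p_i,\rho^{(i)}_{\cF'}\}$). The only cosmetic difference is that the paper realizes the general POVM via a Stinespring dilation with an ancillary observer, whereas you work directly with the POVM elements $E_i$ and note that only the diagonal matrix elements $\bra{s_n}E_i\ket{s_n}$ survive the orthogonality of the $\ket{\cE_{/\cF\cF'_n}}$ — the two descriptions yield identical conditional ensembles.
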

Proof of this theorem is delegated to the appendix. Additionally, to illustrate this theorem we plot consensus ${\mathfrak C }(\cF : \cF' | \cS_\Lambda)$ for the {\tt c-maybe} model in Fig.~\ref{fig02}.

However, in more realistic scenarios, the environment's degrees of freedom might interact with one another, mixing the information they acquire about $\mc{S}$. To explore such cases, we perform numerical simulations using an all-to-all spin model, where our consensus measure ${\mathfrak C }(\cF : \cF' |  \cS_\Lambda)$ can become negative, yet still serves as a useful indicator of agreement between observers regarding the system observable $\Lambda$.

\section{Rise and Fall of Consensus: Decoherence and Relaxation}

Branching states
are a simplifying assumption. It is correct in some important cases (e.g., photons), but it is interesting to investigate what happens when it is only an approximation. Here we relax it by allowing the subsystems of $\cE$ to interact. The correlation of the two fragments $\cF$ and $\cF’$ will then reflect not only what they ``know’’ about $\cS$, but also information they acquire by direct interaction with one another. Our task is to distinguish and quantify the mutual information corresponding to consensus about the state $\cS$---the basis of the objective classical reality of its state---from the mutual information $\cF$ and $\cF’$ have about each other that is unrelated to what they ``know’’ about $\cS$.

We have already introduced refined mutual information $\mathfrak{I}(\cF:\cF’ |\cS)$,~Eq.~\eqref{refine}, the information-theoretic tool that makes this possible. Here we illustrate refined mutual information on a simple example in which subsystems of the environment interact with each other as well as with $\cS$. The complete Hamiltonian is given by:

\begin{equation}
\boldsymbol{H} = \boldsymbol{\sigma}_{\mathcal{S}}^z \otimes \sum_{i=1}^N d_i\,  \boldsymbol{\sigma}_i^z + \sum_{j \neq k=1}^N g_{j,k}\, \boldsymbol{\sigma}_j^z \otimes \boldsymbol{\sigma}_k^z,
\label{Ham}
\end{equation}
where $\boldsymbol{\sigma}_{\mathcal{S}}^z$ and $\boldsymbol{\sigma}_i^z$ are Pauli $z$ operators acting on the system and the $i$-th environment qubit, respectively. The coefficients $d_i$ represent the couplings between the system and the environment qubits, while $g_{j,k}$ denote the intra-environment couplings between the qubits $\mathcal{E}_j$ and $\mathcal{E}_k$. This model is illustrated in Fig.~\ref{alltoall}. 

\begin{figure}[h!]
\centering
\includegraphics[scale=0.38]{./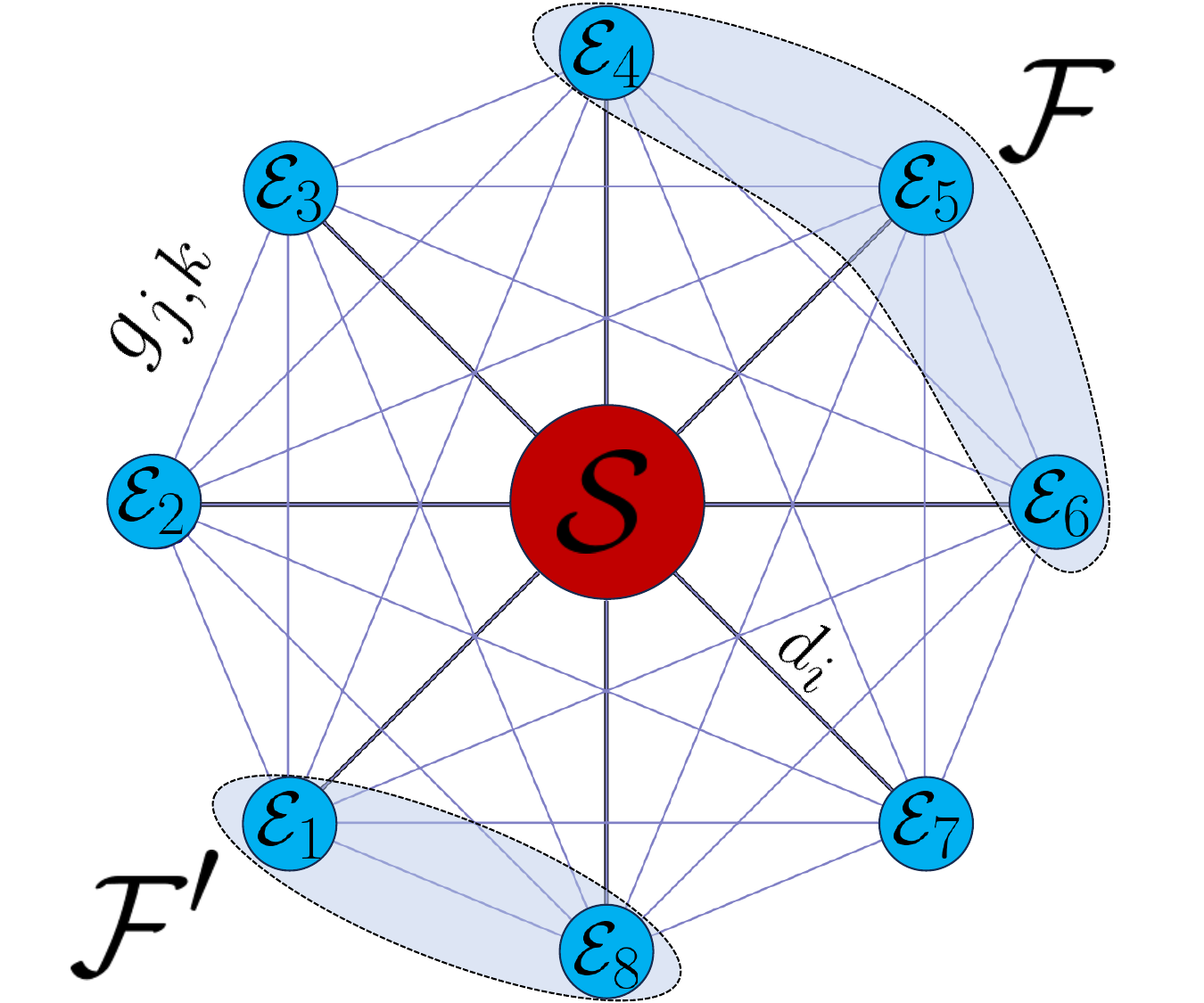}
\caption{\justifying Illustration of the all-to-all spin model with intra-environment interactions. Solid black lines connect $\mathcal{S}$ to each $\mathcal{E}_i$, representing the system-environment interactions governed by the Hamiltonian term $\boldsymbol{\sigma}_{\mathcal{S}}^z \otimes \sum_{i=1}^{N} d_i\,  \boldsymbol{\sigma}_i^z$. Thin, lighter blue lines connect every pair of environment qubits $\mathcal{E}_i$ and $\mathcal{E}_j$, illustrating the all-to-all intra-environment interactions described by the Hamiltonian term $\sum_{j \neq k=1}^{N} g_{j,k}\, \boldsymbol{\sigma}_j^z \otimes  \boldsymbol{\sigma}_k^z$.}
\label{alltoall}
\end{figure}

\begin{figure*}
    \vspace{-5.2mm}
    \includegraphics[width=\textwidth]{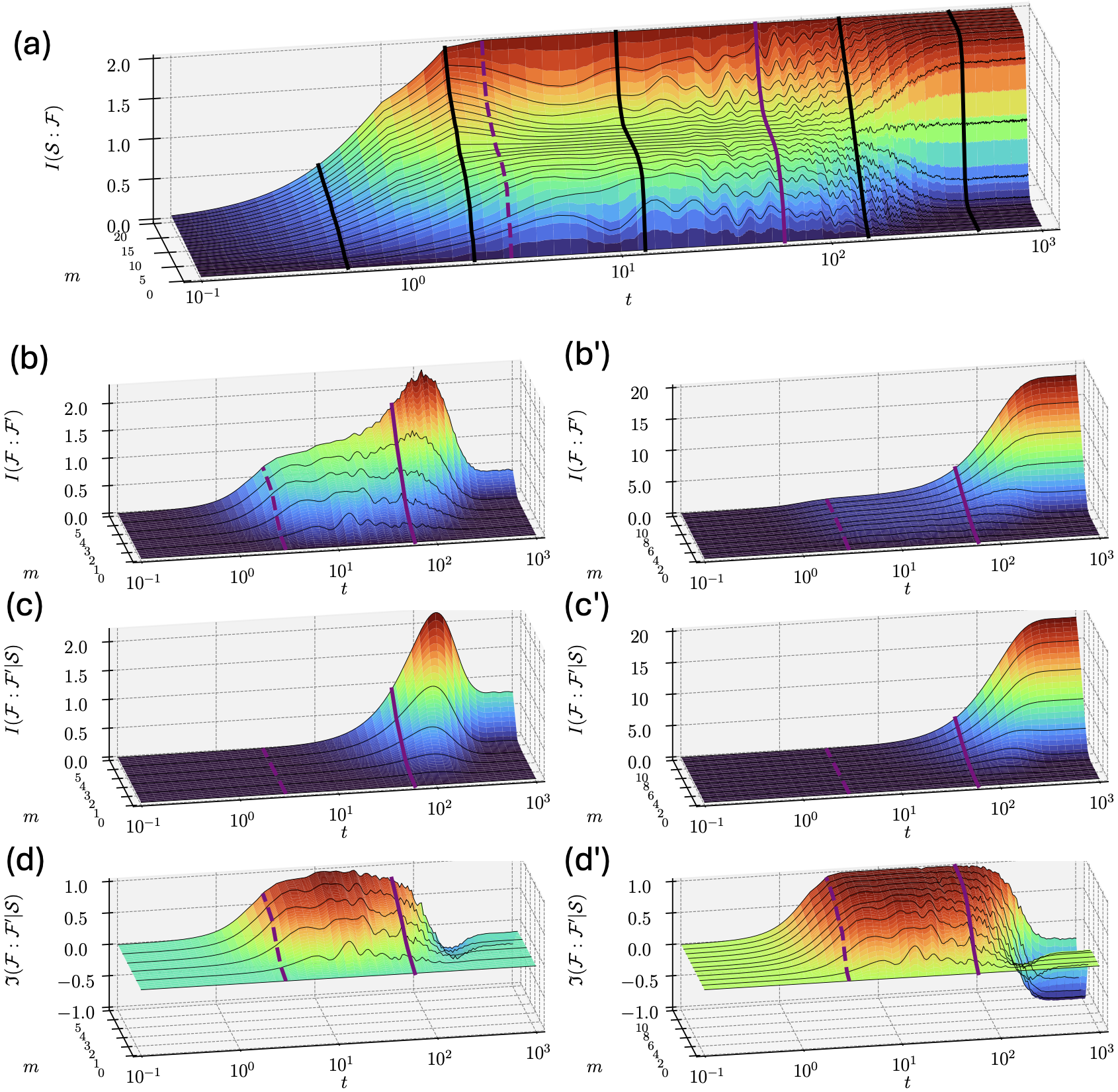}\vspace{0mm}
\caption{\justifying Evolution of mutual informations $I(\cS:\cF)$, $I(\cF:\cF’)$, the conditional $I(\cF:\cF’|\cS)$ and the refined ${\mathfrak I}(\cF:\cF’|\cS) = I(\cF:\cF) - I(\cF:\cF’|\cS)$. (a) Mutual information $I(\cS:\cF)$ increases quickly 
to the classical plateau level $H_\cS$ for sufficiently large fragments. That classical plateau level is reached after the decoherence time $t_{deco}$ when $\cE$, as a whole, acquires a record of the state of $\cS$. 
Quantum Darwinism regime starts when an almost complete information about $\cS$, $ I(\cS:\cF) \simeq (1-\delta) H_\cS $, is recorded in fractions $1/R_\delta$, $R_\delta \gg 2$ of $\cE$. The evolution of redundancy $R_\delta$ is seen 
in Fig.~\ref{combined1}. Quantum Darwinism regime ends as the interaction between the subsystems of $\cE$ leads to relaxation at time $t_{rel}$, scrambling the state of $\cS\cE$. This changes the nature of the dependence 
of the mutual information on the  fragment size $m$, as seen in the ``cuts'' in ($a$) as well as in Fig.~\ref{combined1} $b$. Mutual information $I(\cF:\cF’)$ and the conditional $I(\cF:\cF’|\cS)$ are seen in plots ($b$,$c$) for fragments $m \leq 5$, 
when decoherence by the ``rest of the environment’’ $\cE_{\cS\cF\cF’}$ can be still effective. However, as fragments become larger than that (see $b’,c’$), $\cE_{\cS\cF\cF’}$ is too small to impart decoherence. Note the sloping plateau in 
$I(\cF:\cF’)$ (best visible in ($b$), less visible in ($b’$) as a result of the scale change). Its extent coincides with the Quantum Darwinism regime. Within this time interval fragments of $\cE$ have compatible records of $\cS$, enabling consensus. 
Figures ($d,d'$) consensus $\mathfrak C$ defined using refined mutual information ${\mathfrak I}(\cF:\cF’|\cS) = I(\cF:\cF) - I(\cF:\cF’|\cS)$. Refined consensus has a pronounced plateau that coincides with the Quantum Darwinism 
regime where redundancy is significant. This classical plateau at the level $H_\cS$ persists even for fragment sizes $|\cF| > 5$, where decoherence is ineffective.}
\label{combined2}
\end{figure*}

\begin{figure*}
    \includegraphics[width=\textwidth]{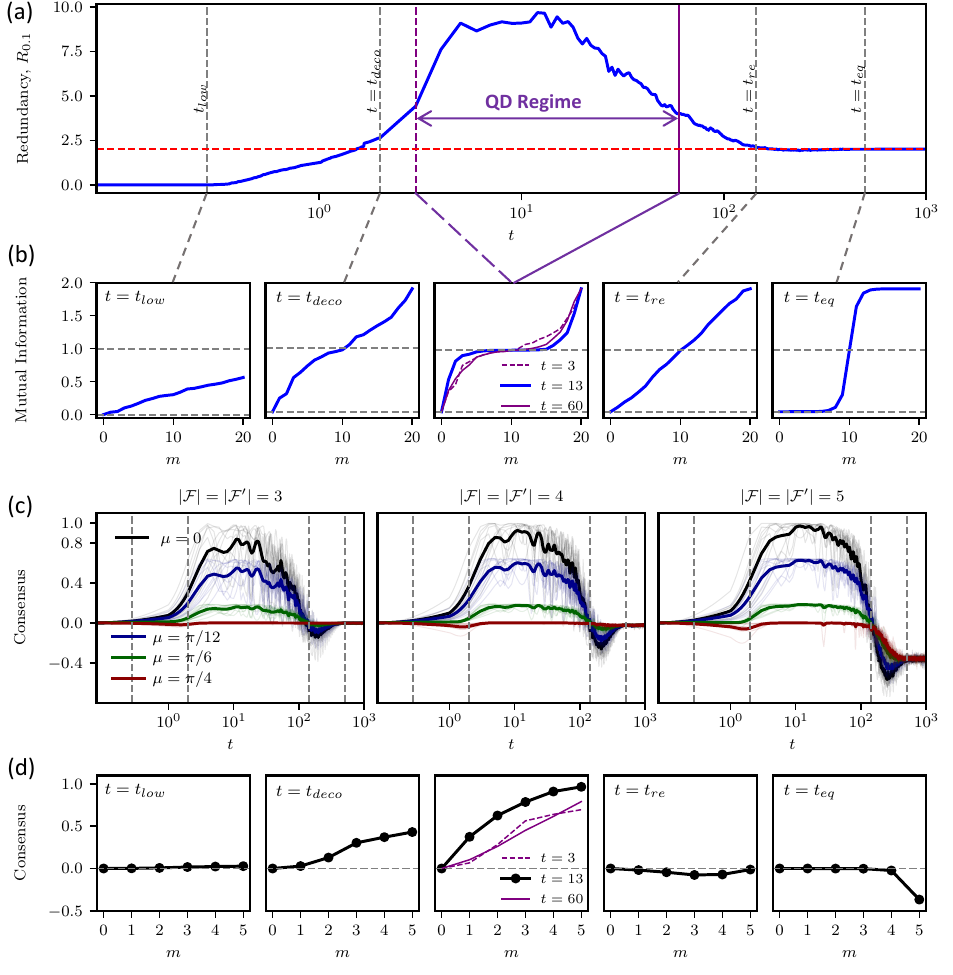}
\caption{\justifying  Redundancy, partial information plots (PIPs), and consensus for the model of Fig.~\ref{alltoall}. (a) Redundancy $R_\delta$ quantifies width of the plateau seen in Fig.~\ref{combined2}$a$. (b) ``Snapshot PIPs'' illustrate effects of decoherence, resulting buildup of redundancy, and eventually scrambling: Pronounced classical plateau persists in the Quantum Darwinism regime, but PIPs are initially linear in $m$: At $t\leq t_{deco}$ each subsystem of the environment contributes a bit more information about $\cS$. This changes in the Quantum Darwinism regime, when nearly complete information about $\cS$ can be recovered from fractions of $\cE$, as is seen in the persistence of the classical plateau. Eventually scrambling replaces records of $\cS$ with correlation to other fragments, resulting in the ``Haar random” looking PIPs that arise after $t_{eq}$. (c) Time dependence of refined mutual information for fragments of size $m=3,4,5$ and for observables with eigenstates tilted away from the pointer states by $\mu$. Thick lines represent averages over many runs, while thin lines are examples of individual realizations of the Hamiltonian, Eq.~\eqref{Ham}. (d) Consensus $\mathfrak C$ defined using refined mutual information ${\mathfrak I}(\cF:\cF’|\cS) = I(\cF:\cF) - I(\cF:\cF’|\cS)$ begins to rise near the decoherence time, persists for sufficiently large fragments for as long as there is a sizeable redundancy, and disappears as a consequence of scrambling and relaxation.}
\label{combined1}
\end{figure*}

The initial state of the combined system is: 
\begin{equation}
|\Psi_{\cS\cE}(0)\rangle = \frac{1}{\sqrt{2^{N+1}}} \left( |0\rangle + |1\rangle \right) \bigotimes^{N}_{i=1} \left( |0\rangle_i + |1\rangle_i \right),
\end{equation}
where $|0\rangle$ and $|1\rangle$ denote the eigenstates of $\boldsymbol{\sigma}_{\mathcal{S}}^z$, and $|{0}\rangle_i$ and $|{1}\rangle_i$ are the eigenstates of $\boldsymbol{\sigma}_i^z$ for each of $N=20$ environment qubits $\mathcal{E}_i$. A detailed discussion of the rise and fall of redundancy in this model can be found in Ref.~\cite{darwin5}. We use it to study the rise and fall of consensus.

The coupling constants $d_i$ and $g_{j,k}$ are chosen randomly from normal distributions with zero mean and standard deviations $\Delta_d$ and $\Delta_g$, respectively. A crucial assumption is that the system couples more strongly to each environment qubit than the environment qubits interact with one another, i.e., $\Delta_d \gg \Delta_g$. This assumption is valid in many physical scenarios, such as a photon bath where intra-environment interactions are negligible (effectively $\Delta_g = 0$). However, this may not hold in environments like a gas, where frequent collisions between molecules dominate. This disparity in interaction strengths 
ensures that decoherence induced by the environment happens faster than mixing due to intra-environment interactions. 

The results are illustrated, starting with Fig.~\ref{combined2}. Thus, Fig.~\ref{combined2}$a$ shows evolution of mutual information $I(\cS:\cF)$ as a function of time. After the initial pre-decoherence period (where the environment is only weakly correlated with $\cS$, so that the eigenstates of the density matrix of $\cS$ are not yet aligned with its pointer states), mutual information assumes the shape characteristic of Quantum Darwinism. Classical plateau appears soon after the decoherence time. Its length is approximately given by the redundancy $R_{\delta}$ (plotted separately in Fig.~\ref{combined1}$a$). The maximum of $R_{\delta}$ is reached at $t \approx 13$, but the pronounced classical plateau persists from at least $ t\leq 5$ until $t \geq 50$. Plateau disappears only when the mutual information due to the direct interaction between the subsystems of the environment begins to dominate $I(\cF:\cF’)$. This is seen in the thick lines marked on the diagram of the evolving mutual information, Fig.~\ref{combined2}$a$, and again in the ``snapshots’’ of $I(\cS:\cF)$ taken at the corresponding instants of time and shown in Fig~\ref{combined1}$b$.

The evolution of both $I(\cF:\cF’)$ and of the conditional mutual information $I(\cF:\cF’|S)$ is illustrated in the two sets of plots, Figs.~\ref{combined2}$b,c$ and \ref{combined2}$b’,c’$. The plots on the left, Figs.~\ref{combined2}$b-c$, show both of these quantities for fragment sizes less than 5. This cutoff on the fragment size is deliberate: It leaves 10 qubits for the reminder of the environment $\cE_{/{\cF\cF’}}$, so one can hope that the resulting decoherence, while imperfect, will suppress quantum correlations in $\cS\cF\cF’$. For fragments with sizes $|\cF|\sim 3 – 5$ the mutual information $I(\cF:\cF’)$ starts near 0, but rises to a sloping plateau after the decoherence time. That gently sloping plateau represents rise of the information shared by the environment subsystems. It persists until $t \geq 50$ (that is, for as long as there is a classical plateau of $I(\cS:\cF)$). After that, $I(\cF:\cF’)$ rises steeply, as the environment fragments become correlated through direct interactions after $t  \sim 100$, the time associated with onset of relaxation. 

The plot Fig.~\ref{combined2}$c$ of the conditional mutual information $I(\cF:\cF’|S)$ looks superficially similar, but there is one crucial difference: Unlike in $I(\cF:\cF’)$, there is no gently sloping plateau. There is however a peak, with the location that approximately matches the location of the peak of $I(\cF:\cF’)$. The peak in mutual information between the two fragments and the similar peak in the conditional mutual information indicate that the mutual information becomes dominated not by what they know about $\cS$, but by what $\cF$ and $\cF'$ know about one other, independently of the information they might share about $\cS$. 

The refined mutual information responsible for the consensus about the classical reality is given by their difference, ${\mathfrak I}(\cF:\cF’ |\cS)$,~Eq.~\eqref{refine}. Fig.~\ref{combined2}$d$, $d'$  shows the evolution of the consensus about the state of $\cS$ for various sizes $m=|\cF|$ of the environment fragments. That plot of ${\mathfrak I}(\cF:\cF’|\cS)$ distills what $\cF$ and $\cF’$ know about $\cS$ from what they know about one another (e.g., as a result of direct interaction). Its plateau extends approximately as far as the sloping plateau of $I(\cF:\cF’)$. For larger fragments ${\mathfrak I}(\cF:\cF’|\cS)$ reaches levels comparable to $H_\cS$, the missing classical information about $\cS$. This happens, however, only when the environment fragments are large enough. That is, the plateau is not there for single qubits, $|\cF|=1$, but is already quite visible for $|\cF| \geq 3 $.

Note that after the classical plateau disappears, the refined mutual information ${\mathfrak I}(\cF:\cF’|\cS)$ decreases, and for large fragments it can even dip to negative values. This occurs when the conditional mutual information $I(\cF:\cF’|S)$ is larger than $I(\cF:\cF’)$. In other words, it occurs when conditioning (here, on $\cS$) increases the mutual information beyond what it was before conditioning. This is a familiar behavior that may well have classical origins (cf.~\cite{CoverThomas}), but in our case it may be in part quantum: When the fragments are sufficiently large, the reminder of the environment $\cE_{/{\cF\cF’}}$ becomes too small to induce decoherence in $\cS\cF\cF’$. This starts already when $|\cF|=|\cF’| \geq 4$, especially at late times, after all the qubits of $\cS\cE$ become entangled. This is also seen in the evolution of the information responsible for the consensus, ${\mathfrak I}(\cF:\cF’|\cS)$, in Figs.~\ref{combined2}$c$, and~\ref{combined2}$c'$.

The rightmost column, Figs.~\ref{combined2}$b^{\prime}$-~\ref{combined2}$c'$, confirms the above narrative regarding the origin of objective classical reality. Figs.~\ref{combined2}$b’$-~\ref{combined2}$c'$, also show how and to what extent it breaks down when the reminder of the environment $\cE_{/{\cF\cF’}}$ becomes gradually too small to effectively decohere $\cS\cF\cF’$ (so that the correlations are still somewhat quantum). This is even more pronounced as $|\cF|=|\cF’|$ approaches 10, at which point there is no decohering environment left. The most striking difference with the plots, Figs.~\ref{combined2}b-d, is the rise of both $I(\cF:\cF’)$ and $I(\cF:\cF’|S)$ to a much higher peak value, its size approximately corresponding to the dimensionality (hence, entropy) of $|\cF|=|\cF’|$. The two peaks in Figs.~\ref{combined2}$b^{\prime}$ and~\ref{combined2}$c'$ have nearly the same size, so what the fragments know about one another is much more than the missing information (i.e., the entropy of $H_\cS$). The ``sloping plateau’’ of Fig.~\ref{combined2}$b$ is still there in Fig.~\ref{combined2}$b^{\prime}$, but the change of the scale imposed by the size of the peak makes it harder to see.

In spite of the domination of the information acquired by the direct interaction between the two fragments, the difference that defines the refined mutual information, ${\mathfrak I}(\cF:\cF’|\cS)$, in Fig.~\ref{combined2}$d’$, looks very similar to Fig.~\ref{combined2}$d$, in that the classical plateau extends over a similar range---that is, starting after the decoherence time, and ending around the relaxation time. Thus, consensus defined with the help of the refined mutual information, ${\mathfrak I}(\cF:\cF’|\cS)$, concerns the state of the system alone, as it should. 

Several ``snapshots’’ of consensus are plotted as a function of the fragment size in Fig.~\ref{combined1}$c$. As was already pointed out before, consensus is present for sufficiently large fragments in the same time interval where classical plateau is also present.

Fig.~\ref{combined1}$c$ explores the time dependence of consensus on the observable of the system. Refined mutual information is plotted there, as a function of time, for the pointer observable $\sigma_z$ as well as for the several other observables with the eigenstates tilted away from the pointer states. Clearly, and as expected, consensus is maximized for the pointer observable, and is gradually diminished as the tilt increases.

Finally, Fig.~\ref{combined1}d shows that the refined consensus is present when the information about $\cS$ is imprinted redundantly in the environment. However, consensus is absent before and is still rather modest at decoherence time. Moreover, scrambling (which replaces the information fragments of the environment had about $\cS$ with the information about the other fragments) destroys consensus and established correlations that can lead to negative refined information ${\mathfrak I}(\cF:\cF’|\cS)$.

An important additional lesson follows from the above discussion, and especially from the plots that illustrate the rise and fall of redundancy seen in Figs.~\ref{combined2}$a$,~\ref{combined1}$a$,~\ref{combined1}$c$, and all the other plots in this section: Redundancy is necessary for consensus---by definition, several fragment must have consistent information about the system to agree (to arrive at consensus) about its state. Therefore, consensus is possible only after the decoherence process disseminates multiple copies of the information about the system in the environment. Moreover, redundancy is destroyed by scrambling of information that replaces what the fragments of $\cE$ know about $\cS$ with what they know about one another. Therefore, consensus can persist only until the relaxation time when the scrambling of information induced by the interactions between the subsystems of $\cE$ replaces what they know about $\cS$ with what they know about each 
other.

\section{Concluding remarks} 

The mystery of the emergence of the classical world of our everyday experience from within the Universe that is fundamentally quantum (as it is made out of quantum components) has been with us for 100 years. We show that it is resolved by recognizing the role of the environment in the emergence of ``the classical’’. The process starts with environment-induced decoherence which einselects preferred stable states, pointer states immune to the interaction with the environment. Thus, superpositions of pointer states are eliminated, and the eigenvalues of the reduced density matrix can be regarded as their probabilities. 

Even after the system has decohered there are still many such stable pointer states on the diagonal of the resulting mixture. So, the question about the perception of a single outcome is not addressed by decoherence alone: Einselection supplies the ``menu’’ of the candidate quasiclassical states, but it does not explain why only one of them is always seen. 

Quantum Darwinism---the recognition that also perception by observers relies on the environment, which in this case acts as the communication channel---explains why our perceptions are always consistent with a unique outcome and, hence, why we attribute objective existence to the states of the everyday objects of interest in everyday settings. 

Environment acts as a witness to the state of the system, each of its fragments holding a record of the systems state. Redundancy with which a state of the decohering system is imprinted on the environment plays crucial role. On the one hand, imprinting of multiple (even imperfect) copies in the environment constrains the set of states that can act as ``originals’’: They must be distinguishable (ideally, orthogonal) to survive creation of redundant records in the environment, and, hence to be perceived. The result is that only the environment - resistant pointer states can be found out indirectly, via their imprints in the fragments of $\cE$. This is suggestive of the ``wavepacket collapse’’, as it precludes the possibility perceiving the pre-measurement state and limits what can be observed to the pointer states.

Redundancy allows for re-confirmation, and it lets many access the information about the einselected state. Our focus was on consensus, as it implies collapse of the evidence about the pointer state of the system perceived by intercepting fragments of its environment. Indeed, the only evidence of the wavepacket collapse is this consensus. 

We have used mutual information between the fragments of the environment as the key ingredient of the measure of consensus. The first key result is Theorem 1 which shows that, when fragments of the environment that interacted with the system have sufficient information about its state, they will inevitably agree on what that state is. Hence, redundancy implies the collapse of the evidence: The data in every environment fragment with enough information about the system point to the same einselected pointer state. 

This collapse of evidence is a firm, Quantum Darwinism - based prediction of quantum theory. It does not require, prior to perception, a unique state of the object. Rather, it implies that redundant system-environment subsystems correlations will necessarily point to the unique state of the system. This collapse of evidence is tantamount to the wavepacket collapse, even though it is not preceded by a literal collapse.

These conclusions follow from Theorem 1. They are supported by the analytically solvable {\tt c-maybe} model. The key assumption of Theorem 1 is the branching structure [exemplified by Eq.~\eqref{branches}]. Branching arises, via decoherence, when the subsystems of the environment monitor the pointer observable of the system of interest, but do not interact with each other. The photon environment is far from equilibrium (e.g., sunlight), satisfies this assumption, and is the usual communication channel through which we perceive our everyday world.

We have also proposed a measure of consensus between the environment fragments suitable for situations when the subsystems of $\cE$ become correlated (so the information they agree about may arise through direct interactions between them, and may have, therefore, little to do with the system of interest). In this case we have suggested the refined mutual information as a suitable measure: It helps filter the information about the system from spurious, scrambling-induced correlations between the environment fragments. 

Refined mutual information, ${\mathfrak I} (\cF:\cF’|\cS)$, (also known as ``interaction’’ to information theorists) filters out such spurious information, as was demonstrated numerically in a many-body model. There, the interactions between subsystems of $\cE$ eventually scrambled the state of $\cS\cE$, changing the nature of the partial information plots. Refined mutual information separated the relevant and irrelevant contributions to consensus.

However, refined mutual information has an unusual feature: It can be negative. In the setting involving system $\cS$ and two environment fragments this happens when the information gain about $\cS$ increases mutual information between $\cF$ and $\cF’$. The cause can be either increase of classical correlation (see~\cite{CoverThomas}) or (in our quantum case) entanglement (when the measurement of $\cS$ reveals that $\cF$ and $\cF’$ are entangled). We have seen this happen as a result of scrambling in the numerical model.

Theorem 2 shows that, for branching states, refined mutual information is always nonnegative providing that the branches are decohered (e.g., by the rest of the environment). Thus, for the far-from-equilibrium fragments of the photon environment (which human observers use to gain vast majority of what they know about the world), refined mutual information is a useful measure of consensus. Moreover, for branching states ${\mathfrak I} (\cF:\cF’|\cS)$ coincides with mutual information between the fragments. 

The situation becomes more complicated when there are preexisting correlations between the environment subsystems, or when subsystems interact, so the state is no longer branching. Detailed analysis of such cases is beyond the scope of this work. As we have seen in Figs.~\ref{combined2},~\ref{combined1}, when the environment subsystems interact and / or decoherence is ineffective, refined mutual information can be negative. Nevertheless, as Figs.~\ref{combined2},~\ref{combined1} also illustrate, as long as decoherence is effective (e.g., the reminder of the environment suffices to decohere branches of $\cS\cF\cF’$) ${\mathfrak I} (\cF:\cF’|\cS)$ or ${\mathfrak C} (\cF:\cF’|\cS)$ is an effective measure of consensus, as it can distinguish consensus about $\cS$ even when the conditions that assure positivity of refined mutual information are not met.

We end by noting that the analysis of the emergence of consensus sheds new light, and perhaps even settles, the issue of the wavepacket collapse. The answer that emerges may be viewed by some as unexpected: Rather than the ``literal collapse’’ into a single state of the system (the goal of, e.g., the spontaneous collapse program), consensus guarantees---using information-theoretic tools---the ``collapse’’ of perceptions: Consensus addresses the question of the perception of a single outcome---the origin of the collapse of the evidence. This is all that is needed to explain the origin of the objective classical reality.

\acknowledgements{We thank Sebastian Deffner and Davide Girolami for discussions and comments. A.T. acknowledges support from the Center for Nonlinear Studies and the U.S. Department of Energy under the LDRD program at Los Alamos. B.Y. acknowledges support in part from the U.S. Department of Energy, Office of Science, Office of Advanced Scientific Computing Research, through the Quantum Internet to Accelerate Scientific Discovery Program, and in part from the LDRD program at Los Alamos.}

\bibliography{opm}

\onecolumngrid

\newpage

\begin{center}
\textbf{\large Supplementary material \vspace{0.05in} \\
Consensus About Classical Reality in a Quantum Universe}
\end{center}
In the following supplementary material we first show the derivation of the analytic expressions for the information deficits in the {\tt c-maybe}. We then present the full proofs of Theorem 2 stated in the main text.

\setcounter{equation}{0}
\setcounter{figure}{0}
\setcounter{table}{0}
\setcounter{page}{1}
\makeatletter
\renewcommand{\theequation}{S\arabic{equation}}
\renewcommand{\thefigure}{S\arabic{figure}}
\renewcommand{\bibnumfmt}[1]{[S#1]}
\renewcommand{\citenumfont}[1]{#1}

\section*{Information deficits and finite-size environment in the {\tt c-maybe} model}

We consider the {\tt c-maybe} model and we derive analytic expressions for the mutual information $I(\cS:\cF)$ and $I(\cF:\cF^{'})$ near the plateau in the good decoherence limit ($N$ tends to infinity). It will prove useful to work with the series expansions of the quantities of interest in our problem. In fact we know that, for $|x|<1$,
\begin{equation}
	\log_{2}(1+x)=\left(\frac{1}{\ln(2)}\right)\sum_{i=1}^{\infty}(-1)^{i+1} \frac{x^{i}}{i}, \ \ \text{and} \ \ \log_{2}(1-x)=-\left(\frac{1}{\ln(2)}\right) \sum_{i=1}^{\infty}  \frac{x^{i}}{i}.
\end{equation}
In the {\tt c-maybe} model, for a system starting in the initial state $|\psi_{S}\rangle_{0}= \sqrt{p} |0\rangle + \sqrt{q} |1\rangle$, we can derive the analytic expression for $I\left(\mathcal{S}: \mathcal{F}_m\right)$ (Touil et al.):
$$
I\left(\mathcal{S}: \mathcal{F}_m\right)=h\left(\lambda_{N, p}^{+}\right)+h\left(\lambda_{m, p}^{+}\right)-h\left(\lambda_{N-m, p}^{+}\right),
$$
where $h(x)=-x \log_{2}(x)-(1-x) \log_{2}(1-x)$ and
$$
\lambda_{k, p}^{ \pm}=\frac{1}{2}\left(1 \pm \sqrt{(q-p)^2+4 s^{2 k} p q}\right) .
$$

We thus have a closed expression for the mutual information $I\left(\mathcal{S}: \mathcal{F}_m\right)$.

From the above, for $p=q=1/2$, and for infinite environment ($N$ very large), we get
$$
I\left(\mathcal{S}: \mathcal{F}_m\right)\approx h\left(\lambda_{m, p}^{+}\right),
$$
and
\begin{equation}
h\left(\lambda_{m, p}^{+}\right)\approx h\left(\frac{1}{2}(1+s^{m})\right),
\end{equation}
hence using the expansion in equation (1) we get
\begin{equation}
	\log_{2}(1+x) \approx \left(\frac{1}{\ln(2)}\right)(x-\frac{x^2}{2}), \ \ \text{and} \ \ \log_{2}(1-x)\approx \left(\frac{1}{\ln(2)}\right)(-x-\frac{x^2}{2}),
\end{equation}
which imply
\begin{equation}
\begin{split}
I\left(\mathcal{S}: \mathcal{F}_m\right) &\approx \log_{2}(2)-\left(\frac{1}{\ln(2)}\right) \frac{s^{2m}}{2}\\
& = H_{\mc{S}}- \left(\frac{e^{H_{\mc{S}}}-1}{2H_{\mc{S}}}\right) s^{2m}.
\end{split}
\end{equation}

and 
\begin{equation}
I\left(\mathcal{F}_m: \mathcal{F}_{m^{'}}\right) \approx H_{\mc{S}} - \left(\frac{e^{H_{\mc{S}}}-1}{2H_{\mc{S}}}\right) \left(s^{2m}+s^{2m^{'}}-s^{2(m+m^{'})}\right).
\label{main01}
\end{equation}

For the case where $p\neq q$ (without loss of generality we assume $q>p$, to avoid carrying the absolute value everywhere), we get
$$
\lambda_{k, p}^{ \pm} \approx q\left(1 + s^{2k} \frac{p}{q-p}\right) .
$$
Which implies
\begin{equation}
I\left(\mathcal{S}: \mathcal{F}_m\right) \approx -q\log_{2}(q) - p\log_{2}(p) - \frac{pqs^{2m}}{q-p}\log_{2}(\frac{q}{p})- q(1+\frac{ps^{2m}}{q-p})\log_{2}(1+\frac{ps^{2m}}{q-p})-p(1-\frac{qs^{2m}}{q-p})\log_{2}(1-\frac{qs^{2m}}{q-p}),
\end{equation}
using the expansion in equation (1) we get 
\begin{equation}
I\left(\mathcal{S}: \mathcal{F}_m\right) \approx  H_{\mc{S}}- \frac{pq}{q-p}\log_{2}(\frac{q}{p})s^{2m}.
\end{equation}
and 
\begin{equation}
I\left(\mathcal{F}_m: \mathcal{F}_{m^{'}}\right) \approx  H_{\mc{S}}- \frac{pq}{q-p}\log_{2}(\frac{q}{p})\left(s^{2m}+ s^{2m^{'}}-s^{2(m+m^{'})} \right).
\end{equation}
In the limit where $p=q$ we recover the expression in Eq.~\eqref{main01}.

Now, we assume that $N$ is finite and for the case where $p\neq q$ (similar to the above analysis), we have
$$
\lambda_{k, p}^{ \pm} \approx q\left(1 + s^{2k} \frac{p}{q-p}\right) .
$$
hence
\begin{equation}
H_{X} =-q\log_{2}(q) - p\log_{2}(p) - \frac{pqs^{2k}}{q-p}\log_{2}(\frac{q}{p})- q(1+\frac{ps^{2k}}{q-p})\log_{2}(1+\frac{ps^{2k}}{q-p})-p(1-\frac{qs^{2k}}{q-p})\log_{2}(1-\frac{qs^{2k}}{q-p}),
\end{equation}
where $H_{X}$ is the entropy of subsystem $X$. For $H_{\mc{S}}$ ($H_{\mc{SF}}$), we have $k=N$ ($k=N-m$). And for $H_{\mc{SF}'}$ ($H_{\mc{SFF'}}$), we have $k=N-m'$ ($k=N-m-m'$). Using the expansion in equation (1) we get
\begin{equation}
H_{X} =H^{max}_{\mc{S}}- \frac{pq}{q-p}\log_{2}(\frac{q}{p})s^{2k},
\end{equation}
such that $H^{max}_{\mc{S}}=-q\log_{2}(q) - p\log_{2}(p)$ is the maximum entropy of the system (only achieved in the limit of good decoherence). Therefore, we get
\begin{equation}
\begin{split}
H_{\mc{S}} &=H^{max}_{\mc{S}}- \frac{pq}{q-p}\log_{2}(\frac{q}{p})s^{2N},\\
H_{\mc{SF}} &=H^{max}_{\mc{S}}- \frac{pq}{q-p}\log_{2}(\frac{q}{p})s^{2(N-m)},\\
H_{\mc{SF}'} &=H^{max}_{\mc{S}}- \frac{pq}{q-p}\log_{2}(\frac{q}{p})s^{2(N-m')},\\
H_{\mc{SFF}'} &=H^{max}_{\mc{S}}- \frac{pq}{q-p}\log_{2}(\frac{q}{p})s^{2(N-m-m')}.
\end{split}
\end{equation}
Hence we can identity
\begin{equation}
\begin{split}
\epsilon &=\frac{pq}{H_{\cS}(q-p)}\log_{2}(\frac{q}{p})\left(s^{2N}-s^{2(N-m)}\right),\\
\epsilon' &=\frac{pq}{H_{\cS}(q-p)}\log_{2}(\frac{q}{p})\left(s^{2N}-s^{2(N-m')}\right),\\
\tilde{\epsilon} &=\frac{pq}{H_{\cS}(q-p)}\log_{2}(\frac{q}{p})\left(s^{2N}-s^{2(N-m-m')}\right).\\
\end{split}
\end{equation}
In the limit where $p=q=1/2$, we get
\begin{equation}
\begin{split}
\epsilon &=\left(\frac{e^{H^{max}_{\mc{S}}}-1}{2H_{\cS}}\right)\left(s^{2N}-s^{2(N-m)}\right),\\
\epsilon' &=\left(\frac{e^{H^{max}_{\mc{S}}}-1}{2H_{\cS}}\right)\left(s^{2N}-s^{2(N-m')}\right),\\
\tilde{\epsilon} &=\left(\frac{e^{H^{max}_{\mc{S}}}-1}{2H_{\cS}}\right)\left(s^{2N}-s^{2(N-m-m')}\right).\\
\end{split}
\end{equation}

\section*{Consensus for singly-branching states}

\subsection*{Theorem statement}
\paragraph*{\textbf{Theorem 2:}} \textit{Consider the singly-branching structure of the global wave function $|\psi_{\mc{SE}}\rangle$ in $D_{\cS}$-dimensions, such that
\begin{equation}
|\psi_{\mc{SE}}\rangle= \sum^{D_{\mc{S}}-1}_{n=0} \sqrt{q_n} |s_n\rangle |\mc{F}_n\rangle |\mc{F}'_n\rangle |\cE_{/\cF \cF'_n}\rangle,
\end{equation}
where $\{|s_n\rangle\}_{n \in \llbracket 1; D_{\cS} \rrbracket}$, $\{|\mc{F}_n\rangle\}_{n \in \llbracket 1; D_{\cS} \rrbracket}$, and $\{|\cE_{/\cF \cF'_n}\rangle\}_{n \in \llbracket 1; D_{\cS} \rrbracket}$ are orthonormal states. Therefore, independent of the measurements applied on $\mc{S}$ we have
\begin{equation}
{\mathfrak I }(\cF : \cF' | \cS)=I(\mc{F}:\mc{F}')-I(\mc{F}:\mc{F}'|{\mc{S}}) \geq 0.
\end{equation}
}

\paragraph*{\textbf{Proof of Theorem 2:}} The goal here is to prove that any measurement on singly-branching states leads to positive consensus, for orthogonal fragment states $\mc{F}$ and orthogonal states of $R$ (i.e. the rest of the environment is also large enough to ensure orthogonality). The structure of states we consider is the following
\begin{equation}
|\psi_{\mc{SE}}\rangle= \sum^{D_{\mc{S}}}_{n=1} \sqrt{q_n} |s_n\rangle |\mc{F}_n\rangle |\mc{F}'_n\rangle|\cE_{/\cF \cF'_n}\rangle,
\end{equation}
In the above notation, subsystem $\cE_{/\cF \cF'}$ is the rest of the environment. From this expression we get
\begin{equation}
\rho_{\mc{F}}= \sum_{n} q_n |\mc{F}_n\rangle \langle \mc{F}_n|,
\end{equation}
\begin{equation}
\rho_{\mc{F}'}= \sum_{n} q_n |\mc{F}'_n\rangle \langle \mc{F}'_n|,
\end{equation}
\begin{equation}
\rho_{\mc{FF}'}= \sum_{n} q_n |\mc{F}_n\rangle \langle \mc{F}_n| \otimes |\mc{F}'_n\rangle \langle \mc{F}'_n|.
\end{equation}
Since $|\mc{F}_n\rangle$ form an orthonormal basis, we have~\cite{nielsen2002quantum}
\begin{equation}
H_{\mc{FF}'}= H[q_n]+\sum_{n} q_n H_{|\mc{F}'_{n}\rangle}= H[q_n]+ 0= H_{\mc{F}},
\end{equation}
where $H[q_n]=-\sum_{n} q_n \log_{2}{(q_n)}$. This directly implies that $I(\mc{F}:\mc{F}')= H_{\mc{F}'}$.

Now, let's determine the $I(\mc{F}:\mc{F}'|{\mc{S}})= \sum_i \gamma_i {\mathfrak D}_{\text{KL}}(\rho^{(i)}_{\mc{F}\mc{F}'}||\rho^{(i)}_{\mc{F}} \otimes \rho^{(i)}_{\mc{F}'})$ for arbitrary POVMs on $\mc{S}$~\footnote{We consider complete measurements, hence $\sum_{i} \gamma_i=1$}. Using Stinespring's dilation we can enlarge our Hilbert space and model arbitrary POVMs with a unitary between $|\psi_{\mc{SE}}\rangle$ and an observer $|O_0\rangle$. In particular, the conditional states of the fragments ($\mc{F}$, $\mc{F}'$, and $\mc{F}\mc{F}'$) are defined with respect to some reference observer states $\{|\mc{O}_i\rangle\}_i$. A measurement in this case is a unitary $U$ coupling the system $|\mc{S}\rangle$ and an observer initially in state $|O_0\rangle$. We then get
\begin{equation}
U|\psi_{\mc{SE}}\rangle |O_0\rangle=\sum_n \sqrt{q_n} \left( U |s_n\rangle |O_0\rangle\right)  |\mc{F}_n\rangle |\mc{F}'_n\rangle |\cE_{/\cF \cF'_n}\rangle,
\end{equation}
which implies
\begin{equation}
\begin{split}
&U|\psi_{\mc{SE}}\rangle |O_0\rangle =\sum_{n,j} \sqrt{q_n} \sqrt{c^{(n)}_j} |n_j\rangle_{\cS}  |\mc{F}_n\rangle |\mc{F}'_n\rangle  |O^{n}_j\rangle |\cE_{/\cF \cF'_n}\rangle,\\
&= \sum_i \sqrt{\gamma_i}\left(\sum_{n,j} \frac{\sqrt{q_n} \sqrt{c^{(n)}_j}}{\sqrt{\gamma_i}} \langle \mc{O}_i  |O^{n}_j\rangle   |n_j\rangle_{\cS} |\mc{F}_n\rangle |\mc{F}'_n\rangle  |\cE_{/\cF \cF'_n}\rangle\right)\otimes |\mc{O}_i\rangle,
\end{split}
\label{s11}
\end{equation}
For convenience of notation let's define $\sqrt{\beta_{i,j,n}} \equiv \sqrt{c^{(n)}_j} |\langle \mc{O}_i  |O^{n}_j\rangle|$. Then, the conditional reduced density matrices read
\begin{equation}
\rho^{(i)}_{\mc{F}}= \sum_{n} \frac{q_n\sum_{j}\beta_{i,j,n}}{\gamma_i} |\mc{F}_n\rangle \langle \mc{F}_n|,
\end{equation}
\begin{equation}
\rho^{(i)}_{\mc{F}'}= \sum_{n} \frac{q_n\sum_{j}\beta_{i,j,n}}{\gamma_i} |\mc{F}'_n\rangle \langle \mc{F}'_n|,
\end{equation}
\begin{equation}
\rho^{(i)}_{\mc{FF}'}= \sum_{n} \frac{q_n\sum_{j}\beta_{i,j,n}}{\gamma_i} |\mc{F}_n\rangle \langle \mc{F}_n| \otimes |\mc{F}'_n\rangle \langle \mc{F}'_n|,
\end{equation}
where $\gamma_i=\sum_{n,j} q_n \beta_{i,j,n}$. Similar to the above argument we have
\begin{equation}
H^{(i)}_{\mc{FF}'}= H[\frac{q_n\sum_{j}\beta_{i,j,n}}{\gamma_i}]+\sum_{n} \frac{q_n\sum_{j}\beta_{i,j,n}}{\gamma_i} H^{(i)}_{|\mc{F}'_n\rangle}= H[\frac{q_n\sum_{j}\beta_{i,j,n}}{\gamma_i}]+ 0= H^{(i)}_{\mc{F}},
\label{entorpyexp}
\end{equation}
hence $I(\mc{F}:\mc{F}'|{\mc{S}})= \sum_i \gamma_i H^{(i)}_{\mc{F}'}$. Therefore, since the von Neumann entropy is concave, we get
\begin{equation}
{\mathfrak I }(\cF : \cF' | \cS)= H_{\mc{F}'}-\sum_i \gamma_i H^{(i)}_{\mc{F}'} \geq 0.
\end{equation}
QED.

\end{document}